\crefname{thm}{Theorem}{Theorems}
\crefname{lem}{Lemma}{Lemmas}
\crefname{cor}{Corollary}{Corollaries}
\crefname{prb}{Problem}{Problems}
\crefname{dfn}{Definition}{Definitions}
\crefname{section}{section}{sections}
\crefname{subsection}{section}{sections}
\DeclareMathOperator{\GL}{GL}
\DeclareMathOperator{\SU}{SU}
\DeclareMathOperator{\Kron}{Kron}
\DeclareMathOperator{\Sym}{Sym}
\DeclareMathOperator{\diag}{diag}
\DeclareMathOperator{\tr}{tr}
\newcommand{\QQ}{\mathbb Q}
\newcommand{\CC}{\mathbb C}
\renewcommand{\PP}{\mathbb P}
\newcommand{\ZZ}{\mathbb Z}
\newcommand{\RR}{\mathbb R}
\newcommand{\TheTitle}{Membership in moment polytopes is in NP and coNP}
\newcommand{\TheAuthors}{P.~B\"urgisser, M.~Christandl. K.D.~Mulmuley, M.~Walter}
\headers{\TheTitle}{\TheAuthors}
\begin{document}
\title{Membership in moment polytopes is in NP and coNP%
\thanks{Submitted to the editors on Nov 18, 2015.
\funding{PB was partially supported by DFG grant BU 1371/3-2.
MC was supported by a Sapere Aude grant of the Danish Council for Independent Research, an ERC Starting Grant, an SNSF Professorship, and the VILLUM FONDEN via the QMATH Centre of Excellence (grant no.\ 10059).
KM was supported by NSF grant CCF-1017760.
MW was supported by the Simons Foundation, FQXI, and AFOSR (grant no.\ FA9550-16-1-0082).}}}

\author{Peter B\"urgisser%
\thanks{Technische Universit\"at Berlin, Institut f\"ur Mathematik, MA 3-2, Stra\ss{}e des 17.\ Juni 136, 10623 Berlin, Germany (\email{pbuerg@math.tu-berlin.de}).}
\and
Matthias Christandl%
\thanks{
QMATH, Department of Mathematical Sciences, University of Copenhagen, Universitetsparken 5, 2100 Copenhagen, Denmark (\email{christandl@math.ku.dk}).}
\and
Ketan D.~Mulmuley%
\thanks{The University of Chicago, Computer Science Dept., 1100 East 58th Street, Chicago, IL 60637 (\email{mulmuley@cs.uchicago.edu}).}
\and
Michael Walter%
\thanks{Stanford University, Department of Physics, 382 Via Pueblo Mall, Stanford, CA 94305 (\email{michael.walter@stanford.edu}).}
}

\maketitle

\begin{abstract}
  We show that the problem of deciding membership in the moment polytope associated with a finite-dimensional unitary representation of a compact, connected Lie group is in $\NP\cap\coNP$.
  This is the first non-trivial result on the computational complexity of this problem, which naively amounts to a quadratically-constrained program.
  Our result applies in particular to the Kronecker polytopes, and therefore to the problem of deciding positivity of the stretched Kronecker coefficients.
  In contrast, it has recently been shown that deciding positivity of a single Kronecker coefficient is $\NP$-hard in general~\cite{ikenmeyer2015vanishing}.
  We discuss the consequences of our work in the context of complexity theory and the quantum marginal problem.
\end{abstract}

\begin{keywords}
computational complexity,
moment polytope,
asymptotic representation theory,
Kronecker coefficients,
quantum information theory,
quantum marginal problem
\end{keywords}

\begin{AMS}
68Q25, 53D20, 22E46, 81R05
\end{AMS}

\section{Introduction and summary of results}

Moment polytopes are convex polytopes that describe invariants of Hamiltonian manifolds.
Their study has a long and rich history in mathematics~\cite{schur23klasse,horn1954doubly,kostant1973convexity,atiyah1981convexity,ness1984stratification,kirwan1984convexity} and in physics~\cite{guillemin1990symplectic}, most recently in quantum information theory in the context of the quantum marginal problem~\cite{christandl2006spectra,daftuar2005quantum,klyachko2004quantum,christandl2007nonzero,altunbulak2008pauli,walter2013entanglement,christandl2014eigenvalue,walter2014thesis}.
Moment polytopes and their underlying representation-theoretic data have also become of interest in computer science, since they are possible sources of representation-theoretic obstructions that may lead to new complexity-theoretic lower bounds~\cite{mulmuley2001geometric,burgisser2011geometric,burgisser2011overview}.

In this paper, we consider the computational complexity of deciding membership in a given moment polytope.
We show that, for a broad class of groups and representations, the problem of deciding membership in the associated moment polytope is in the complexity classes $\NP$ and $\coNP$.
Before presenting our general results, we discuss the important special case of the moment polytopes associated with the Kronecker coefficients, which are of particular interest in several of the applications mentioned above.
Recall that a \emph{Young diagram} with $k$ boxes is an integer partition $\lambda_1 + \dots + \lambda_m = k$ where $\lambda_1 \geq \dots \geq \lambda_m > 0$.
We write $\lambda \vdash k$ for a Young diagram with $k$ boxes.
Young diagrams can be visualized as arrangement of blocks with $\lambda_i$ blocks in the $i$-th row; thus $m$ is called the height of the diagram.
Any Young diagram with height no larger than $m$ can be understood as a highest weight of the unitary group $U(m)$ and, by restriction, of $\SU(m)$.
Now consider the irreducible representation of $G = \SU(m)^3$ on $M = (\CC^m)^{\otimes 3}$ by tensor products, $(U_A,U_B,U_C) \mapsto U_A \otimes U_B \otimes U_C$.
Then any triple of Young diagrams $\lambda = (\lambda_A,\lambda_B,\lambda_C)$ can be understood as a highest weight of $G$.
The multiplicity $g(\lambda_A,\lambda_B,\lambda_C)$ of the corresponding irreducible representation of $G$ in the $k$-th symmetric power $\Sym^k(M)$ is known as the \emph{Kronecker coefficient}, where we may assume that all three diagrams have the same number of boxes $k$. 
The corresponding moment polytopes are known as the \emph{Kronecker polytopes}, parameterized by $m$, and they are given by
\[
  \Kron(m) = \overline{\left\{ \frac {(\lambda_A,\lambda_B,\lambda_C)} k : \lambda_A, \lambda_B, \lambda_C \vdash k, g(\lambda_A,\lambda_B,\lambda_C) > 0 \right\}} \subseteq \RR^{3m}.
\]
We remark that $\Kron(m)$ is known to be a convex polytope of dimension $3(m-1)$~\cite{vergnewalter2014inequalities}.

We consider the problem \textsc{KronPolytope} of deciding whether $(\lambda_A,\lambda_B,\lambda_C)/k \in \Kron(m)$, given as input a triple of Young diagrams $\lambda_A, \lambda_B, \lambda_C \vdash k$ (each specified by its row lengths encoded in binary), where $m$ denotes the maximum among the heights of the three Young diagrams.
Equivalently, \textsc{KronPolytope} is the problem of deciding whether there exists some positive integer $l > 0$ such that the stretched Kronecker coefficient $g(l \lambda_A, l \lambda_B, l \lambda_C) > 0$.

Our main result in the case of the Kronecker polytopes then is the following theorem:

\begin{thm}
\label{thm:kronecker}
  The problem \textsc{KronPolytope} is in $\NP\cap\coNP$.
\end{thm}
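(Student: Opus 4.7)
The plan is to exhibit polynomial-size certificates for both membership and non-membership in $\Kron(m)$, using two complementary characterizations of the Kronecker polytope.

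For the $\NP$ direction, I would use the quantum marginal characterization: by Kirwan's convexity theorem applied to the projective space of $M=(\CC^m)^{\otimes 3}$, together with a rationality/asymptotic argument from Ness--Mumford, the point $(\lambda_A,\lambda_B,\lambda_C)/k$ lies in $\Kron(m)$ if and only if there is a pure state $\psi\in M$ whose three one-body reduced density matrices have (decreasingly ordered) spectra $\lambda_A/k$, $\lambda_B/k$, $\lambda_C/k$. Such a $\psi$ has only $m^3$ complex amplitudes, already polynomial in the input size, so the nontrivial step is to show that, whenever membership holds, one can realise $\psi$ by complex numbers of polynomial bit length. I expect this to follow from standard effective bounds for the real solutions of polynomial systems of bounded degree, applied to the semi-algebraic system ``$\psi$ has the prescribed marginal spectra''. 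Verification of a candidate state is then straightforward: compute the three marginals and check that their characteristic polynomials agree with $\prod_i\bigl(t-\lambda_{X,i}/k\bigr)$ for $X\in\{A,B,C\}$, all in polynomial time.

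For the $\coNP$ direction, I would use the description of $\Kron(m)$ by a finite list of explicit linear inequalities of Ressayre type, as made effective in the Kronecker setting by Vergne and Walter~\cite{vergnewalter2014inequalities}. A non-membership witness is a single inequality $\langle\omega,\cdot\rangle\leq c_\omega$ that is violated at $(\lambda_A,\lambda_B,\lambda_C)/k$, accompanied by a certificate that the inequality is genuinely a defining inequality of the polytope. Each admissible $\omega$ is labelled by polynomial-size combinatorial data (a parabolic subgroup together with a triple of Schubert classes), and numerical violation is checked by a single inner product. The substantive part is certifying validity of the inequality: by the Berenstein--Sjamaar/Ressayre theory this reduces to a positivity/non-vanishing statement on a moment polytope associated with a group of strictly smaller rank, which opens the door to induction.

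The main obstacle will be closing this certification loop with uniform polynomial bounds. One must guarantee (i) polynomial bit length for all Schubert/Ressayre data used at every level of the recursion, (ii) a polynomially bounded recursion depth, controlled by the rank of the acting group, and (iii) that the $\NP$-witness bound on pure states applies uniformly to all intermediate polytope-membership problems encountered in the induction, so that \emph{both} the $\NP$ and the $\coNP$ certificates can be assembled simultaneously. The structural theorems on moment polytopes and their inequality descriptions are in place; the heart of the argument is the quantitative size and depth bookkeeping that turns these structural results into the two polynomial-time verifiers.
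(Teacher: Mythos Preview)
Your $\NP$ direction has a genuine gap. You propose to exhibit a state $\psi$ with \emph{exactly} the prescribed marginal spectra and to control its bit length via ``standard effective bounds for the real solutions of polynomial systems of bounded degree''. But the relevant system has $\Theta(m^3)$ real unknowns, and the standard effective bounds from real algebraic geometry (quantifier elimination, sampling in semi-algebraic sets) produce witness points whose bit size is exponential in the number of variables, hence exponential in $m$; against an input of size $O(m\log k)$ this is not polynomial. No polynomial bound of the kind you need is known for this system, and there is no reason to expect the fibre over a rational spectrum triple to contain a rational $\psi$ of small height at all.

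The paper circumvents this by replacing exact verification with \emph{approximate} verification. The certificate is a truncated vector $\tilde\psi\in\QQ[i]^{m^3}$ with $b$ bits per coordinate, and the verifier accepts if the reduced density matrices of $\tilde\psi$ lie within Frobenius distance $\varepsilon:=\tfrac{1}{2k(4m)^{4m}}$ of $(\diag\lambda_A,\diag\lambda_B,\diag\lambda_C)/k$. Soundness is the crux: one must show that any rational point \emph{outside} $\Kron(m)$ has distance at least $2\varepsilon$ from it. This distance bound comes from the $\coNP$ analysis---a Siegel-lemma bound on the integer size of the Ressayre facet normals---combined with the integrality of $\lambda$ and $k$. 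So the two halves are not independent: the facet-size estimate underlying the $\coNP$ certificate is precisely what makes the approximate $\NP$ verifier sound. With the gap established, a simple truncation argument shows that $b=O(m\log m+\log k)$ bits suffice.

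Your $\coNP$ direction is closer in spirit but still leaves the hard part open and takes a detour the paper avoids. The paper's certificate is a triple $(H,z,p)$: the Ressayre element $(H,z)$ together with an integer point $p$ at which the associated determinant polynomial $d_{H,z}$ does not vanish. The verifier checks admissibility (a rank computation on the weights in $\Phi(H=z)$), the trace condition (a cardinality comparison), evaluates $\det D_{H,z}(p)$ to certify the determinant condition, and finally checks the violation $H\cdot\lambda<kz$. Siegel's lemma bounds the bit size of $(H,z)$ by $O(m^2\log m)$, and Schwartz--Zippel guarantees a suitable $p$ with entries in $\{0,\dots,m^3\}$. There is no recursion to smaller-rank groups and no Schubert-calculus certificate; the determinant-at-a-point trick replaces your proposed inductive validation entirely and removes the depth and size bookkeeping you flagged as the main obstacle.
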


That is, there exists polynomially-sized certificates such that both membership and non-membership can be verified in polynomial time.
We discuss the implications of \cref{thm:kronecker} in \cref{subsec:discussion of results} below.

\smallskip

We now consider the general case.
Let $G$ denote a compact, connected Lie group and $M$ a unitary representation of $G$.
For each integer $k > 0$, we denote by $m_{G,M,k}(\lambda)$ the multiplicity of the irreducible $G$-representation with highest weight $\lambda$ in $\Sym^k(M)$, the $k$-th symmetric power of~$M$.
Then the pairs $(k,\lambda)$ for which $m_{G,M,k}(\lambda) > 0$ form a finitely generated semigroup and so the following set is a rational convex polytope, called the \emph{moment polytope} associated with the $G$-representation $M$:
\[
  \Delta_G(M) = \overline{\left\{ \frac \lambda k : m_{G,M,k}(\lambda) > 0 \right\}} \subseteq \Lambda^*_G \otimes_\ZZ \RR \subseteq i\mathfrak t^*
\]
Here, $\Lambda^*_G$ denotes the weight lattice and $\mathfrak t^*$ the dual of the Lie algebra of a maximal torus $T$ of $G$.
We shall assume that the moment polytope is of maximal dimension, i.e., that $\dim \Delta_G(M) = \dim T$, the rank of the Lie group (equivalently, that generic points in the projective space $\PP(M)$ have discrete $G$-stabilizer).

We are interested in the problem \textsc{MomentPolytope} of deciding whether $\lambda/k \in \Delta_G(M)$, given as input a compact, connected Lie group $G$, a finite-dimensional unitary representation $M$ with moment polytope of maximal dimension, a highest weight $\lambda$, and a positive integer $k$. Equivalently, \textsc{MomentPolytope} is the problem of deciding whether there exists some positive integer $l > 0$ such that the stretched multiplicity $m_{G,M,lk}(l\lambda) > 0$.
We discuss the precise encoding of the input in \cref{subsec:specification} below.
Roughly speaking, the group is specified in terms of Dynkin diagrams and the representation in terms of its highest weights, given by its coefficients in binary with respect to a basis of fundamental weights, together with the total dimension of the representation $M$ in unary. The latter is a natural requirement, as it allows the algorithms to run in polynomial time in the dimension of the representation, which can be exponential in the specification of the highest weights alone. In the case of the Kronecker coefficients, this requirement is vacuous, as the dimension is only of polynomial size in the specification of $\lambda_A,\lambda_B,\lambda_C$, and so it is not hard to see that the problem \textsc{MomentPolytope} is indeed a proper generalization of \textsc{KronPolytope}.

The main result of this paper is the following theorem, which generalizes \cref{thm:kronecker} above.

\begin{thm}
\label{thm:general}
  The problem \textsc{MomentPolytope} is in $\NP\cap\coNP$.
\end{thm}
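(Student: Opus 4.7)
The plan is to establish $\coNP$ and $\NP$ containment separately, both hinging on an effective inequality description of $\Delta_G(M)$. By the theorem of Ressayre---extended to unitary representations of compact connected groups by Vergne-Walter \cite{vergnewalter2014inequalities}---the polytope $\Delta_G(M)$ is cut out of the dominant Weyl chamber by finitely many linear inequalities indexed by ``admissible pairs'' $(H,R)$, where $H$ is a rational one-parameter subgroup of $T$ and $R$ is Schubert-type data on an associated Levi flag variety. Each such pair has bit-length polynomial in the input, and both the admissibility of $R$ and the coefficient $c(H,R)$ in the inequality $\langle H,\cdot\rangle \le c(H,R)$ are computable in polynomial time, given that $\dim M$ is specified in unary (relying on polynomial-time Schubert calculus via polytopal models such as Knutson-Tao honeycombs or Berenstein-Zelevinsky patterns).

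For the $\coNP$ direction, if $\lambda/k \notin \Delta_G(M)$ then some admissible pair $(H,R)$ witnesses a violated inequality; guess $(H,R)$ and check $\langle H,\lambda/k\rangle > c(H,R)$ in polynomial time.

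For the $\NP$ direction, I combine Carath\'eodory's theorem with a face-flag certificate. Writing $\lambda/k = \sum_{i=0}^d \alpha_i v_i$ as a convex combination of at most $d+1 = \dim\Delta_G(M)+1$ vertices, each $v_i$ is the unique intersection of $d$ tight admissible hyperplanes and hence has polynomial bit-length by Cramer's rule, as do the coefficients $\alpha_i$. To certify $v_i \in \Delta_G(M)$, I exhibit a maximal flag of faces $\{v_i\} = F_0 \subset F_1 \subset \cdots \subset F_d = \Delta_G(M)$, specified by a chain of admissible pairs in which each $F_j$ cuts $F_{j+1}$ as a facet. The classical face principle of the theory of moment polytopes (Brion, Ressayre) asserts that each $F_j$ is itself the moment polytope of an explicitly described action of a Levi subgroup $G_j \subseteq G$ on the sum $M_j$ of the non-positive weight spaces of the associated $H$, obtained from the admissible pair cutting out the face. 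Inductively $v_i \in F_0 \subseteq \cdots \subseteq F_d$, where the base case $F_0 = \{v_i\}$ is a zero-dimensional moment polytope whose unique point (a highest weight of the innermost subrepresentation) is verifiable against $v_i$ in polynomial time. Aggregating over the $d+1$ vertices yields a certificate of polynomial total size.

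The main obstacle is making the above effectiveness claims rigorous. On the inequality side, the polynomial-time evaluation of $c(H,R)$ and verification of admissibility reduce to polynomial-time decision procedures for Schubert calculus positivity, for which modern algorithmic tools suffice (Knutson-Tao honeycombs and related polytopal models of Littlewood-Richardson-type coefficients). On the face-recursion side, one must ensure that the pair $(G_j, M_j)$ inherits an encoding of polynomial bit-length from $(G,M,H)$: the Dynkin type of $G_j$ is read off from the vanishing roots of $H$, the fundamental weights are inherited by restriction, and the unary dimension bound is preserved since $\dim M_j \le \dim M$. Together these points ensure both the correctness and the polynomial size of the $\NP$- and $\coNP$-certificates.
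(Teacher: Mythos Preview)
Your $\coNP$ direction follows the same inequality framework as the paper, but there is a genuine gap in the verification step. You assert that the ``admissibility'' of a pair $(H,R)$ and the associated coefficient $c(H,R)$ can be computed in polynomial time by appealing to ``polynomial-time decision procedures for Schubert calculus positivity'' via honeycombs and related polytopal models. This is only known for type~$A$ Littlewood--Richardson coefficients; for general compact connected $G$ and general $G/P$, no polynomial-time positivity test for the relevant Schubert structure constants is available. In the Vergne--Walter formulation used in the paper, the corresponding condition is that a certain determinant polynomial $d_{H,z}$ in $\#\Phi(H{=}z)$ variables is not identically zero---an instance of polynomial identity testing, not known to be in $\P$. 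The paper resolves this by \emph{including in the certificate} a point $p$ at which $d_{H,z}(p)\neq 0$ (whose existence and polynomial size follow from Schwartz--Zippel), so the verifier merely evaluates a determinant. Your certificate omits any such witness, and the claimed polynomial-time check is not substantiated.

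Your $\NP$ direction is genuinely different from the paper's and inherits the same gap. The paper does \emph{not} use Carath\'eodory or face recursion; instead it invokes Mumford's geometric description $\Delta_G(M)=\{\,r(\psi):0\neq\psi\in M\,\}$ and takes as certificate a rational vector $\tilde\psi\in\QQ[i]^{\dim M}$ whose moment-map image $\mu(\tilde\psi)$ is sufficiently close to $\lambda/k$. Soundness (no false positives) then follows from a lower bound on the distance from any non-member $\lambda/k$ to the polytope, obtained from the Siegel-type estimate on Ressayre elements, and completeness from a truncation-error analysis. Your face-flag approach, by contrast, requires at each step verifying that the guessed hyperplane actually supports a face of $\Delta_{G_{j+1}}(M_{j+1})$; as you present it (``admissible pairs'' cutting facets), this again needs the Schubert/determinant nonvanishing check you have not justified. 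The base case is also underspecified: a zero-dimensional face corresponds to a moment polytope $\Delta_{G_0}(M_0)$ that is a single point, and you have not explained how to recognize this situation or compute that point in general---it is not simply ``a highest weight of the innermost subrepresentation'' unless $M_0$ happens to be irreducible or one-dimensional, which is not automatic. A variant of your idea using generic one-parameter subgroups (so that the Levi is the full torus and the face is a single $T$-weight) might be salvageable, but that is not what you wrote, and the paper's moment-map certificate is both simpler and avoids these issues entirely.
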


Sets of defining inequalities for $\Delta_G(M)$ have been computed in~\cite{berenstein2000coadjoint,klyachko2004quantum,ressayre2010geometric,vergnewalter2014inequalities}.
To prove \cref{thm:kronecker,thm:general}, our principal ingredient is the recent description from~\cite{vergnewalter2014inequalities}.
We also rely on an alternative, geometric characterization of $\Delta_G(M)$, often taken as its definition, whose equivalence has been established by Mumford~\cite{ness1984stratification}.
In both cases, a major challenge is to show that these mathematical results can be made effective, i.e., that approximations can be found that give rise to polynomial-sized certificates, and that these certificates can in turn be verified efficiently.

\subsection{Discussion of results}
\label{subsec:discussion of results}

We note that \cref{thm:general,thm:kronecker} are non-trivial complexity-theoretic results.
On the one hand, from the representation-theoretic point of view, all known upper bounds on the stretching factor $l$ required to witness membership of some $\lambda/k$ in the moment polytope can be exponential in the input specification (even when restricted to invariants, see, e.g., \cite{derksen2001polynomial}).
On the other hand, the geometric description of the moment polytopes naively amounts to a quadratically constrained program, which are NP-hard in general.
In contrast, \cref{thm:general,thm:kronecker} strongly suggest that \textsc{MomentPolytope} and \textsc{KronPolytope} are \emph{not} NP-hard problems, for otherwise $\NP = \coNP$, which is widely regarded as implausible (e.g., \cite{goldreich2008computational}).
Thus the situation is similar to that of the integer factorization problem (likewise in $\NP\cap\coNP$), the unknotting problem (in $\NP\cap\coNP$, assuming the generalized Riemann hypothesis), and the graph isomorphism problem (in $\NP\cap\coAM$).
This is in remarkable contrast to the recent result in~\cite{ikenmeyer2015vanishing} that deciding positivity of a single Kronecker coefficient is NP-hard in general.

It might be conjectured that membership in moment polytopes can in fact be decided in polynomial time.
This is known to be true for the Horn polytopes, which geometrically characterize the eigenvalues of triples of Hermitian matrices that add up to zero, $A + B + C = 0$.
However, the proof in this case relies precisely on the fact that the \emph{Littlewood-Richardson coefficients}, which are the associated representation-theoretic coefficients and in fact special Kronecker coefficients, are saturated and that their positivity can be decided in polynomial time~\cite{knutson2001honeycombs,blasiak2011gct3,burgisser2013deciding}.
Saturation does not hold in general, and in particular not for the Kronecker coefficients.
Moreover, as we have just discussed, deciding positivity is in general an NP-hard problem, so this strategy of proof cannot be generalized. In contrast, our strategy of proof circumvents this barrier and may be seen as a first step towards establishing the conjecture.

Moment polytopes also play a fundamental role in quantum physics.
Let $\psi_{ABC} \in \CC^m \otimes \CC^m \otimes \CC^m$ be a unit vector and $\psi_A, \psi_B, \psi_C$ the corresponding reduced density matrices (defined precisely in~\cref{eq:rdm} below).
Then the Kronecker polytope $\Kron(m)$ can be identified with the set of possible triples $(r_A(\psi), r_B(\psi), r_C(\psi))$, where we write $r_A(\psi)$ for the spectrum of $\psi_A$, etc.
Thus \textsc{KronPolytope} corresponds precisely to the \emph{one-body quantum marginal problem} for three quantum particles~\cite{christandl2006spectra,daftuar2005quantum,klyachko2004quantum,christandl2007nonzero,walter2014thesis,vergnewalter2014inequalities}.
We explain this connection more carefully in \cref{subsec:kron np} as part of our proof that \textsc{KronPolytope} is in \NP.
We may consider more general quantum marginal problems where the set of observables is given by the Lie algebra of a complex reductive group.
This includes the one-body quantum marginal problem for several distinguishable particles, for indistinguishable particles with bosonic or fermionic statistics, and for spin and orbit degrees of freedom~\cite{altunbulak2008pauli,walter2014thesis,vergnewalter2014inequalities}.
Any such problem can be phrased in terms of moment polytopes of representations and therefore corresponds to special cases of the general membership problem, \textsc{MomentPolytope} (cf.\ \cref{subsec:general np}).
Our results show that the computational complexity is in $\NP \cap \coNP$ given the encoding of problem instances described in \cref{subsec:specification}.
We may also consider an alternative encoding where the dimension of the representation in unary is not part of the input.
As the dimension can be exponentially large in the bitsize of the remaining input, our \cref{thm:general} implies that the general membership problem is in $\NEXP \cap \coNEXP$.
In some cases of physical interest, it is known that the complexity is in $\QMA(2)$~\cite{liu2007quantum}, which is contained in $\NEXP$.

Finally, we remark that for a fixed group and representation (such as for Young diagrams with bounded height), the membership problem is trivial, since it concerns only a single polytope which can be precomputed.
Likewise, it is known that in this case positivity can be decided and indeed that the coefficients can be calculated precisely in polynomial time~\cite{christandl2012computing}.

\subsection{Organization of the paper}

In \cref{sec:kronecker}, we first prove our result in the important special case of the family of Kronecker polytopes (\cref{thm:kronecker}).
We will follow the proof strategy for the general membership problem, but our presentation will not rely on expert knowledge in the representation theory of Lie groups.
Then, in \cref{sec:general} we prove our general result, where the group and representation defining the moment polytope are part of the input (\cref{thm:general}).

\subsection{Notation and conventions}
\label{subsec:notation}

We write $\lambda \vdash k$ for a Young diagram with $k$ boxes and $\ell(\lambda)$ for the number of rows of a Young diagram.
We define $\ZZ^m_0 := \{ (x_i) \in \ZZ^m : \sum_{i=1}^m x_i = 0 \}$.
We always encode natural numbers in binary, rational numbers in terms of their numerator and denominator, numbers in $\QQ[i]$ by their real and imaginary part, and Young diagrams by listing their row lengths.
All logarithms are with respect to base 2.
We write $\# S$ for the cardinality of a finite set $S$.
Throughout this article, we will work with several norms:
For vectors $v$ in a real vector space, we denote by $\lVert v \rVert_2$ the Euclidean norm and by $\lVert v \rVert_\infty$ the maximum norm.
For vectors $\psi$ in complex Hilbert space, we denote by $\lVert \psi \rVert$ the norm induced by the inner product.
For linear operators $A$ acting on Hilbert space, we denote by $\lVert A \rVert$ the operator norm, by $\lVert A \rVert_1$ the trace norm, and by $\lVert A \rVert_F$ the Frobenius norm.
Finally, in \cref{sec:general} we introduce norms $\lVert-\rVert_{i \mathfrak g}$ and $\lVert-\rVert_{i \mathfrak g^*}$ from the Killing form of a Lie algebra $\mathfrak g$.

\section{The Kronecker polytopes}
\label{sec:kronecker}

In this section, we will prove our complexity result for the Kronecker polytopes (\cref{thm:kronecker}).
While this result can also be obtained as a consequence of our general result (\cref{thm:general}), which we prove in \cref{sec:general} below, the exposition in this section contains all essential ideas while not requiring expert knowledge in representation theory.
Our notation and terminology will match precisely the one used in \cref{sec:general} below.

\subsection{Inequalities for the Kronecker polytopes}
\label{subsec:kronecker ieqs}

Let $m > 0$ be a positive integer.
We define $P(m) := \{ (x_A,x_B,x_C) \in \RR^{3m}: \sum_{i=1}^m x_{A,i} = \sum_{j=1}^m x_{B,j} = \sum_{k=1}^m x_{C,k} = 1 \}$.
We note that $P(m)$ is an affine space of dimension $3(m-1)$.
We will call $\Phi(m) := \{ (e_i, e_j, e_k) : i,j,k=1,\dots,m \} \subseteq P(m)$ the set of \emph{weights}, where $e_i$ denotes the $i$-th standard basis vector of $\RR^m$, and $N(m) := \{ (e_i - e_j, 0, 0) : i > j \} \cup \{ (0, e_i - e_j, 0) : i > j \} \cup \{ (0, 0, e_i - e_j) : i > j \}$ the set of \emph{negative roots}.
For any $H = (H_A,H_B,H_C) \in (\ZZ^m_0)^3$ and $z \in \ZZ$, we define the following three subsets:
\begin{align*}
  \Phi(H=z) &= \{ \varphi \in \Phi(m) : \varphi \cdot H = z \}, \\
  \Phi(H<z) &= \{ \omega \in \Phi(m) : \omega \cdot H < z \}, \\
  N(H<0) &= \{ \alpha \in N(m) : \alpha \cdot H < 0 \}.
\end{align*}

\begin{dfn}[\cite{vergnewalter2014inequalities}]
\label{dfn:ressayre kron}
  A \emph{Ressayre element} is a pair $(H,z)$, where $H = (H_A,H_B,$ $H_C) \in (\ZZ^m_0)^3$ and $z \in \ZZ$, such that the following conditions are satisfied:
  \begin{enumerate}
    \item \emph{Admissibility:} The points in $\Phi(H=z)$ span an affine hyperplane in $P(m)$.
    \item \emph{Trace condition:} $\# N(H<0) = \# \Phi(H<z)$.
    \item \emph{Determinant condition:} Consider the following matrix $D_{H,z}$ whose rows are indexed by elements $\omega \in \Phi(H<z)$ and whose columns are indexed by elements $\alpha \in N(H<0)$,
    \begin{equation}
    \label{eq:kron det matrix}
      (D_{H,z})_{\omega,\alpha} = \begin{cases}
      X_\varphi & \text{if } \varphi := \omega - \alpha \in \Phi(H=z), \\
      0 & \text{otherwise},
    \end{cases}
    \end{equation}
    where the $X_\varphi$ are indeterminates.
    By the trace condition, $D_{H,z}$ is a square matrix, so that we can form the \emph{determinant polynomial} $d_{H,z} := \det D_{H,z}$, and the condition is that $d_{H,z}$ should be non-zero. 
  \end{enumerate}
\end{dfn}

We observe that the number of Ressayre elements is finite (up to overall rescaling).
We have the following description of the Kronecker polytopes in terms of finitely many inequalities~\cite{vergnewalter2014inequalities}:
\begin{equation}
\label{eq:kronecker ressayre}
  \Kron(m) = \{ r \in P_+(m) : r \cdot H \geq z \text{ for all Ressayre elements $(H,z)$} \},
\end{equation}
where $P_+(m) = \{ (r_A,r_B,r_C) \in P(m) \;:\; r_{X,1} \geq \dots \geq r_{X,m} \; (\forall X=A,B,C) \}$ is the positive Weyl chamber.
It is known that $\Kron(m) \subseteq P(m)$ is a convex polytope of maximal dimension $3(m-1)$~\cite{vergnewalter2014inequalities}.
Let us call a facet of $\Kron(m)$ \emph{non-trivial} if it is \emph{not} of the form $r_{X,i} \geq r_{X_i+1}$.
\Cref{eq:kronecker ressayre} implies that any non-trivial facet is necessarily given by a Ressayre element.

\subsection{\textsc{KronPolytope} is in coNP}
\label{subsec:kron conp}

A problem instance for \textsc{KronPolytope} is given by three Young diagrams $\lambda_A,\lambda_B,\lambda_C \vdash k$.
We now describe a polynomial-time algorithm that takes as input the problem instance $\lambda_A,\lambda_B,\lambda_C \vdash k$ together with a certificate that consists of a triple $(H,z,p)$, where $H \in (\ZZ^m_0)^3$, $z \in \ZZ$ and $p \in \ZZ^{\#\Phi(H=z)}$, where $m$ is the maximal number of rows in $\lambda_A$, $\lambda_B$ and $\lambda_C$.

The algorithm proceeds as follows:
We first check the conditions in \cref{dfn:ressayre kron} to verify that $(H,z)$ is a Ressayre element for $\Kron(m)$:
\begin{enumerate}
  \item Admissibility:
    The number of weights $\#\Phi(m)$ is $m^3$ and each weight lives in a space of dimension $3m$.
    For each weight $\varphi\in\Phi(m)$, we can check whether $\varphi\in\Phi(H=z)$ by verifying that the inner product with $H$ satisfies $H\cdot\varphi=z$.
    Thus we can in polynomial time determine $\Phi(H=z)$ and compute the rank of the polynomial-size matrix with columns $\begin{psmallmatrix}\varphi\\-1\end{psmallmatrix}$ for $\varphi\in\Phi(H=z)$.
    The element $(H,z)$ is admissible if and only if the rank is equal to $3(m-1)$.
  \item Trace condition:
    As there are $O(m^2)$ negative roots and $m^3$ weights, each of which lives in a space of dimension $3m$ and can be constructed efficiently, both cardinalities can be computed and compared in polynomial time.
  \item Determinant:
    We construct the matrix $D_{H,z}(p)$ defined as in \cref{eq:kron det matrix} for $X = p$.
    The matrix is of polynomial size and we can therefore compute its determinant $d_{H,z}(p)$ exactly in polynomial time.
    We accept if and only if $d_{H,z}(p) \neq 0$.
\end{enumerate}
At this point we are sure that $(H,z)$ defines a non-trivial facet of the Kronecker polytope (the trivial inequalities are automatically satisfied since $\lambda_A$, $\lambda_B$ and $\lambda_C$ are partitions).
In the last step of the algorithm, we verify that this facet indeed separates $(\lambda_A,\lambda_B,\lambda_C)/k$ from the polytope by checking that
\[ H \cdot (\lambda_A,\lambda_B,\lambda_C) < k z. \]
It is clear that the algorithm will accept only if $(\lambda_A,\lambda_B,\lambda_C)/k \not\in \Kron(m)$.

We will now show that, conversely, if $(\lambda_A,\lambda_B,\lambda_C)/k \not\in \Kron(m)$ then there always exists a polynomial-sized certificate $(H,z,p)$ such that the algorithm accepts.
For this, we need the following basic estimate:

\begin{lem}
\label{lem:kron siegel}
  Any non-trivial facet of $\Kron(m)$ can be described by a Ressayre element $(H,z)$ with
  \begin{equation}
  \label{eq:kron siegel}
    \max~\{ \lVert H \rVert_\infty, \lvert z \rvert \} \leq (4m)^{3m},
  \end{equation}
  where $\lVert H \rVert_\infty := \max_{i=1}^{3m}~\lvert H_i \rvert$.
\end{lem}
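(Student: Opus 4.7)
The plan is to show that any non-trivial facet of $\Kron(m)$ is cut out by a \emph{primitive} Ressayre element, and to bound its size via Hadamard's inequality applied to the linear system determined by admissibility. First I would observe that a non-trivial facet $F$ is an affine hyperplane inside the $3(m-1)$-dimensional affine space $P(m)$, hence has affine dimension $3m-4$. By \cref{eq:kronecker ressayre}, $F$ can be realized as $F=\Kron(m)\cap\{r\in P(m):r\cdot H=z\}$ for some Ressayre element $(H,z)$. Admissibility guarantees that the weights in $\Phi(H=z)\subseteq\Phi(m)$ affinely span this hyperplane, so one can select $d:=3m-3$ affinely independent weights $\varphi_1,\dots,\varphi_d\in\Phi(H=z)$.

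Next I would write down the linear system these impose on the unknowns $(H,z)\in\QQ^{3m}\times\QQ$: the $3m-3$ equations $\varphi_\ell\cdot H-z=0$ together with the three trace-zero conditions $\sum_i H_{A,i}=\sum_i H_{B,i}=\sum_i H_{C,i}=0$. This yields a $3m\times(3m+1)$ matrix $M$ whose entries lie in $\{-1,0,1\}$: the three trace-zero rows contain $m$ entries equal to $1$, each weight row $\varphi_\ell\cdot H-z$ contributes three $1$'s in the $H$-columns and a single $-1$ in the $z$-column. Since $(H,z)$ satisfies all these equations, it lies in $\ker M$; conversely, the hyperplane $F$ uniquely determines $(H,z)$ up to rescaling, so $\ker M$ is one-dimensional.

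Cramer's rule (equivalently, signed cofactor expansion) then produces an integer vector in $\ker M$ whose coordinates are, up to sign, the $3m\times 3m$ minors of $M$. Each such minor has rows of Euclidean norm at most $\sqrt{3m}$, so by Hadamard's inequality its absolute value is at most $(3m)^{3m/2}\leq(4m)^{3m}$. Dividing by the gcd yields the primitive integer kernel vector $(H',z')$, whose coordinates satisfy $\max\{\lVert H'\rVert_\infty,\lvert z'\rvert\}\leq(4m)^{3m}$.

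Finally I would argue that $(H',z')$ is itself a Ressayre element defining the same facet $F$. Since $(H',z')$ is a positive rational multiple of $(H,z)$ (the sign is fixed by requiring $\Kron(m)\subseteq\{r\cdot H'\geq z'\}$), the sets $\Phi(H'=z')$, $\Phi(H'<z')$, and $N(H'<0)$ coincide with those for $(H,z)$, hence admissibility, the trace condition, and the determinant condition transfer verbatim. The main technical step is the Hadamard estimate, but this is routine; the only subtlety is the bookkeeping to confirm that the primitive kernel vector remains a bona fide Ressayre element, which follows from the homogeneity of \cref{dfn:ressayre kron} under positive rescaling.
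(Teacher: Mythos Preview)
Your argument is correct and follows the paper's strategy: use admissibility to pick $3(m-1)$ affinely independent weights in $\Phi(H=z)$, adjoin the three trace-zero constraints, and bound an integer solution of the resulting $3m\times(3m+1)$ system with $\{0,\pm1\}$ entries. The difference is in the tool used for the bound. The paper simply invokes Siegel's lemma, obtaining $\max\{\lVert H\rVert_\infty,|z|\}\le(3m+1)^{3m}$, whereas you take the more explicit route of reading off the primitive kernel vector via cofactors and applying Hadamard's inequality, which yields the sharper $(3m)^{3m/2}$. You are also more careful than the paper on two points: you justify that the kernel is one-dimensional (the paper uses this implicitly but never states it), and you spell out why the small integer solution is again a Ressayre element defining the same facet, via homogeneity of \cref{dfn:ressayre kron} under positive rescaling---a step the paper's proof skips entirely. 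Both approaches are standard, but yours is self-contained and gives a better constant.
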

\begin{proof}
  The admissibility condition in \cref{dfn:ressayre kron} asserts that any Ressayre element $(H,z)$ is the normal vector of an affine hyperplane in $P(m)$ spanned by some affinely independent set of weights $\omega_1, \dots, \omega_{3(m-1)} \in \Phi(H=z)$.
  Therefore, $(H,z) \in \ZZ^{3m+1}$ is an integral solution to the following linear system of equations:
  \begin{align*}
    &H \cdot \omega_1 - z = 0, \;\dots,\; H \cdot \omega_{3(m-1)} - z = 0, \\
    &H_A \cdot (1,\dots,1) = 0, H_B \cdot (1,\dots,1) = 0, H_C \cdot (1,\dots,1) = 0
  \end{align*}
  Note that we have $M = 3m$ equations for $N = 3m + 1$ unknowns, and the absolute value of the coefficients is at most $B=1$.
  Therefore, Siegel's lemma~\cite{hindry2000diophantine} ensures that there exists an integral solution with
  \[ \max~\{ \lVert H \rVert_\infty, \lvert z \rvert \} \leq (NB)^{M/(N-M)} = (3m + 1)^{3m} \leq (4m)^{3m}, \]
  as claimed by the lemma.
\end{proof}

The upshot of \cref{lem:kron siegel} is the following:
If $(\lambda_A,\lambda_B,\lambda_C)/k \not\in \Kron(m)$ then there exists a non-trivial facet separating it from the Kronecker polytope.
\Cref{lem:kron siegel} tells us that any such facet can be encoded by some Ressayre element $(H,z)$ that can be specified using no more than $O(m^2 \log m)$ bits.
Indeed, $(H,z)$ consists of $3m+1$ coefficients, each of which requires $O(m \log m)$ bits.

At last, consider the determinant polynomial $d_{H,z}$, which is a nonzero multivariate polynomial of degree $\#\Phi(H<z) = \#N(H<0) \leq m^3$ in $\#\Phi(H=z) \leq m^3$ variables.
The Schwartz-Zippel lemma~\cite[Corollary 1]{schwartz1980fast} shows that the fraction of points $p \in \{0,\dots,m^3\}^{\#\Phi(H=z)}$ with $d_{H,z}(p)=0$ is at most $\#\Phi(H<z) / (m^3 + 1) < 1$.
It follows that there exists some $p \in \{0,\dots,m^3\}^{\#\Phi(H=z)}$ such that $d_{H,z}(p) \neq 0$.
Note that $p$ can be specified using no more than $O(m^3 \log m)$ bits.

As the input size is $\Omega(m)$, the data $(H,z,p)$ together consists of a polynomial-sized certificate that will be accepted by the algorithm.
We conclude that the problem \textsc{KronPolytope} is in $\coNP$.

We remark that Alon's combinatorial Nullstellensatz~\cite[Theorem 1.2]{alon1999combinatorial} gives a much stronger bound than the Schwartz-Zippel lemma, and it would be interesting to see if it can be exploited to find even smaller certificates.

\subsection{\textsc{KronPolytope} is in NP}
\label{subsec:kron np}

We now show that \textsc{KronPolytope} is in $\NP$.
For this, we will use the following \emph{geometric} description of the Kronecker polytopes, which can be deduced more generally from Mumford's theorem~\cite{ness1984stratification}.
For any non-zero vector $\psi \in A \otimes B \otimes C = \CC^m \otimes \CC^m \otimes \CC^m$, the \emph{reduced density matrix} $\psi_A$ is the Hermitian operator on $\CC^m$ defined by duality in the following way:
\begin{equation}
\label{eq:rdm}
  \forall X_A\!: \qquad \tr \psi_A X_A = \frac {\braket{\psi | X_A \otimes \mathbbm 1_B \otimes \mathbbm 1_C | \psi}} {\braket{\psi|\psi}},
\end{equation}
where $X_A$ ranges over all Hermitian operators on $\CC^m$ and where $\braket{-|-}$ denotes the inner product on $\CC^m \otimes \CC^m \otimes \CC^m$. We likewise define $\psi_B$ and $\psi_C$. We observe that all three reduced density matrices are positive semidefinite operators with unit trace.
Finally, write $r_A(\psi)$, $r_B(\psi)$, $r_C(\psi)$ for the vector of eigenvalues of $\psi_A$, $\psi_B$, $\psi_C$, arranged in non-increasing order.
Then we have the following alternative characterization of the Kronecker polytopes:
\begin{equation}
\label{eq:kronecker polytope geometric}
  \Kron(m) = \{ (r_A(\psi), r_B(\psi), r_C(\psi)) : 0 \neq \psi \in \CC^m \otimes \CC^m \otimes \CC^m \}.
\end{equation}

We now describe a polynomial-time algorithm that takes as input the problem instance $\lambda_A,\lambda_B,\lambda_C \vdash k$ as well as a certificate that consists of a non-zero vector $\tilde\psi \in \QQ[i]^m \otimes \QQ[i]^m \otimes \QQ[i]^m$, where $m$ is the maximal number of rows in $\lambda_A$, $\lambda_B$ and $\lambda_C$.
Here, the certificate $\tilde\psi$ is specified in terms of $2m^3$ rational numbers, namely the real and imaginary parts of its $m^3$ coordinates $(\psi_{a,b,c})_{a,b,c=1}^m$ with respect to the standard product basis (cf.~\cref{subsec:notation}).

The algorithm proceeds in two steps:
First, we compute the reduced density matrices $\tilde\psi_A$, $\tilde\psi_B$, $\tilde\psi_C$ of $\tilde\psi$.
This can be done in polynomial time, since the polynomially many entries of the reduced density matrix $\tilde\psi_A$ can be computed by
\[ \forall a,a'=1,\dots,m\!: \qquad (\tilde\psi_A)_{a,a'} = \frac {\sum_{b,c=1}^m \tilde\psi_{a,b,c} \overline{\tilde\psi_{a',b,c}}} {\sum_{a,b,c=1}^m \lvert\tilde\psi_{a,b,c}\rvert^2}, \]
and likewise for $\tilde\psi_B$ and $\tilde\psi_C$.
Second, we accept if and only if
\begin{equation}
\label{eq:kron accept}
  \lVert (\tilde\psi_A,\tilde\psi_B,\tilde\psi_C) - (\diag \lambda_A,\diag \lambda_B,\diag \lambda_C)/k \rVert_F \leq \frac 1 2 \frac 1 k \frac 1 {(4m)^{4m}}.
\end{equation}
Here, we write $\diag v$ for the diagonal matrix with diagonal entries $v$, and $\lVert-\rVert_F$ denotes the Frobenius norm.
It is immediate that \cref{eq:kron accept} can be verified in polynomial time.

To show that \textsc{KronPolytope} is in \NP, we have to consider two cases.
In the case where $(\lambda_A,\lambda_B,\lambda_C)/k \in \Kron(m)$, we will prove that there exists a polynomially-sized certificate $\tilde\psi$ accepted by the algorithm (we will find that $b = O(m \log m + \log k)$ bits of precision suffice).
In the other case, where $(\lambda_A,\lambda_B,\lambda_C)/k \not\in \Kron(m)$, we will show that no certificate $\tilde\psi$ is accepted by the algorithm.

We first consider the case where the point $(\lambda_A,\lambda_B,\lambda_C)/k \in \Kron(m)$.
According to \cref{eq:kronecker polytope geometric}, there exists a unit vector $\psi \in \CC^m \otimes \CC^m \otimes \CC^m$ such that the eigenvalues $(r_A(\psi),r_B(\psi),r_C(\psi))$ of its reduced density matrices are equal to $(\lambda_A,\lambda_B,\lambda_C)/k$.
By applying a tensor product of local unitaries, $\psi \leadsto (U_A \otimes U_B \otimes U_C) \psi$, we may in fact arrange for the reduced density matrices to be \emph{equal} to the diagonal matrices with entries $\lambda_A/k$, etc. That is, there exists a unit vector $\psi$ such that
\begin{equation}
\label{eq:kron preimage}
  (\psi_A,\psi_B,\psi_C) = (\diag \lambda_A, \diag \lambda_B, \diag \lambda_C)/k.
\end{equation}
Now let $\tilde\psi \in \QQ[i]^m \otimes \QQ[i]^m \otimes \QQ[i]^m$ be a truncation of $\psi$ to $b$ bits in the real and imaginary part of each of its $m^3$ components.
The following lemmas bound the resulting error on the reduced density matrices as a function of the precision $b$:

\begin{lem}
\label{lem:hilbert space vs trace norm}
  Let $\psi \in \CC^D$ be a unit vector and $\tilde\psi \in \QQ[i]^D$ the vector obtained by truncating $\psi$ to $b$ bits in the real and imaginary parts of each of its components. Then we have
  \[ \lVert P_{\tilde\psi} - P_\psi \rVert_1 \leq 5 D^{1/4} 2^{-b/2}, \]
  where $P_\phi$ denotes the orthogonal projection onto the one-dimensional subspace $\CC\phi$ and $\lVert - \rVert_1$ the trace norm.
\end{lem}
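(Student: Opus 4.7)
The plan is to bound the trace distance by a short chain that reduces everything to the coordinate-wise truncation error. First I would control the Euclidean distance: since each of the $2D$ real and imaginary parts of $\psi$ is truncated to $b$ bits, each has error at most $2^{-b}$, and summing over components gives $\lVert\tilde\psi-\psi\rVert \leq \sqrt{2D}\cdot 2^{-b}$. Next, because $P_{\tilde\psi}$ depends only on the ray $\CC\tilde\psi$, I would replace $\tilde\psi$ by the normalized vector $\hat\psi := \tilde\psi/\lVert\tilde\psi\rVert$ (well-defined whenever $\tilde\psi\neq 0$) and use the reverse triangle inequality together with $\lVert\psi\rVert=1$ to get
\[
  \lVert\hat\psi - \psi\rVert \;\leq\; \bigl|1-\lVert\tilde\psi\rVert\bigr| + \lVert\tilde\psi-\psi\rVert \;\leq\; 2\lVert\tilde\psi-\psi\rVert.
\]

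The key step is converting this Euclidean bound on unit vectors into a trace-norm bound on the associated rank-one projections. For this I would invoke the standard identity
\[
  \lVert P_\phi - P_\psi\rVert_1 \;=\; 2\sqrt{1-|\langle\phi,\psi\rangle|^2},
\]
which follows because $P_\phi - P_\psi$ is Hermitian with rank at most $2$, trace $0$, and eigenvalues $\pm\sqrt{1-|\langle\phi,\psi\rangle|^2}$. Bounding $|\langle\phi,\psi\rangle|^2 \geq (\mathrm{Re}\langle\phi,\psi\rangle)^2 = (1-\tfrac12\lVert\phi-\psi\rVert^2)^2 \geq 1 - \lVert\phi-\psi\rVert^2$ for unit vectors then gives $\lVert P_\phi - P_\psi\rVert_1 \leq 2\lVert\phi-\psi\rVert$. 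Chaining the three estimates produces
\[
  \lVert P_{\tilde\psi} - P_\psi\rVert_1 \;\leq\; 2\lVert\hat\psi-\psi\rVert \;\leq\; 4\lVert\tilde\psi-\psi\rVert \;\leq\; 4\sqrt{2D}\cdot 2^{-b},
\]
which in the regime of interest is actually strictly stronger than the claimed $5D^{1/4}2^{-b/2}$.

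The one thing to verify carefully is that $\tilde\psi$ is nonzero, so that $\hat\psi$ is defined. This is automatic whenever $5D^{1/4}2^{-b/2} < 2$, since that forces $\sqrt{2D}\cdot 2^{-b} < 1$ and hence $\lVert\tilde\psi\rVert \geq 1 - \sqrt{2D}\cdot 2^{-b} > 0$; in the complementary regime the inequality is vacuous because $\lVert P_{\tilde\psi}-P_\psi\rVert_1 \leq 2$ for any pair of density matrices. I expect the only genuinely nontrivial step to be the projection-to-Euclidean comparison via the inner-product identity; the truncation estimate and the normalization bound are routine bookkeeping, and the mild looseness of the lemma's stated constants leaves plenty of room in all the inequalities.
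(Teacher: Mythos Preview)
Your proof is correct and follows essentially the same route as the paper: normalize $\tilde\psi$, invoke the identity $\lVert P_\phi - P_\psi\rVert_1 = 2\sqrt{1-|\langle\phi,\psi\rangle|^2}$, and bound the overlap in terms of $\lVert\tilde\psi-\psi\rVert$. The only difference is that the paper uses the cruder estimate $|\langle\psi,\hat\psi\rangle|^2 \geq 1 - 2\lVert\hat\psi-\psi\rVert$, yielding $\lVert P_{\tilde\psi}-P_\psi\rVert_1 \leq 4\sqrt{\lVert\tilde\psi-\psi\rVert}$, whereas your use of $\mathrm{Re}\langle\phi,\psi\rangle = 1-\tfrac12\lVert\phi-\psi\rVert^2$ gives the sharper linear bound $\lVert P_{\tilde\psi}-P_\psi\rVert_1 \leq 4\lVert\tilde\psi-\psi\rVert$; both are more than enough for the stated lemma.
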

\begin{proof}
  We start with the identity~\cite[(9.101)]{nielsen2010quantum}
  \[ \lVert P_{\tilde\psi} - P_\psi \rVert_1 = 2 \sqrt{1 - \lvert \braket{\psi|\frac {\tilde\psi} {\lVert \tilde\psi \rVert}} \rvert^2}, \]
  where $\lVert - \rVert$ denotes the Hilbert space norm and we have used that $\psi$ is a unit vector.
  We now write $\tilde\psi / \lVert \tilde\psi \rVert = \psi + \delta$. Then:
  \[ \lvert \braket{\psi|\frac {\tilde\psi} {\lVert \tilde\psi \rVert}} \rvert^2
   = \lvert \braket{\psi|\psi + \delta} \rvert^2
   \geq (1 - \lvert \braket{\psi|\delta} \rvert)^2
   \geq (1 - \lVert\delta\rVert)^2
   \geq 1 - 2 \lVert\delta\rVert.
  \]
  We now compute
  \begin{align*}
    &\lVert\delta\rVert
    = \lVert \frac {\tilde\psi} {\lVert \tilde\psi \rVert} - \psi \rVert
    \leq \lVert \frac {\tilde\psi} {\lVert \tilde\psi \rVert} - \tilde\psi \rVert + \lVert \tilde\psi - \psi \rVert
    = \lvert 1-\lVert \tilde\psi \rVert \rvert + \lVert \tilde\psi - \psi \rVert \\
    = &\lvert \lVert \psi \rVert - \lVert \tilde\psi \rVert \rvert + \lVert \tilde\psi - \psi \rVert
    \leq 2 \lVert \tilde\psi - \psi \rVert.
  \end{align*}
  By combining all three statements we obtain that
  \[ \lVert P_{\tilde\psi} - P_\psi \rVert_1 \leq 4 \sqrt{\lVert \tilde\psi - \psi \rVert}. \]
  As $\psi$ is a unit vector, the magnitude of each of its $D$ components is no larger than one.
  Thus the truncation incurs an absolute error of at most $2^{-b}$ on the real and imaginary parts.
  We conclude that
  \[
      4 \sqrt{\lVert \tilde\psi - \psi \rVert}
    \leq 4 (2 D 2^{-2b})^{1/4}
    \leq 5 D^{1/4} 2^{-b/2}. 
  \]
\end{proof}

\begin{lem}
\label{lem:kron truncation}
  For all $X=A,B,C$, we have that
  \[ \lVert \tilde\psi_X - \psi_X \rVert_F = O(m 2^{-b/2}). \]
\end{lem}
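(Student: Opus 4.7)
The plan is to reduce the claim to the trace-norm estimate already available from \cref{lem:hilbert space vs trace norm}, using the two well-known facts that (i) reduced density matrices are partial traces of rank-one projectors, (ii) the partial trace is contractive with respect to the trace norm, and (iii) the Frobenius norm of any matrix is bounded by its trace norm.

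More concretely, first I would observe that, with $P_\phi$ denoting the rank-one orthogonal projection onto $\CC\phi$, the definition \eqref{eq:rdm} of the reduced density matrices can be rewritten as $\psi_A = \tr_{BC}(P_\psi)$, and analogously $\tilde\psi_A = \tr_{BC}(P_{\tilde\psi/\lVert\tilde\psi\rVert})$ (this is exactly what the explicit formula displayed before the lemma computes), together with the corresponding identities for $X=B,C$. Since $P_{\tilde\psi}=P_{\tilde\psi/\lVert\tilde\psi\rVert}$, it follows that
\[
  \tilde\psi_X - \psi_X = \tr_{\overline X}\bigl(P_{\tilde\psi} - P_\psi\bigr),
\]
where $\overline X$ denotes the complement of $X$ in $\{A,B,C\}$.

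Next I would invoke the fact that the partial trace is a completely positive, trace-preserving map and therefore a contraction in the trace norm, which yields
\[
  \lVert \tilde\psi_X - \psi_X \rVert_1 \;\leq\; \lVert P_{\tilde\psi} - P_\psi \rVert_1.
\]
Now applying \cref{lem:hilbert space vs trace norm} with $D=m^3$ bounds the right-hand side by $5 m^{3/4} 2^{-b/2}$.

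Finally, since $\lVert M \rVert_F = \bigl(\sum_i \sigma_i(M)^2\bigr)^{1/2} \leq \sum_i \sigma_i(M) = \lVert M \rVert_1$ for any matrix $M$, we conclude
\[
  \lVert \tilde\psi_X - \psi_X \rVert_F \;\leq\; \lVert \tilde\psi_X - \psi_X \rVert_1 \;\leq\; 5 m^{3/4} 2^{-b/2} \;=\; O(m 2^{-b/2}),
\]
as claimed. There is no real obstacle here; the only subtle point is checking that the normalization built into the defining formula of $\tilde\psi_X$ agrees with taking the partial trace of the projector onto the \emph{normalized} vector $\tilde\psi/\lVert\tilde\psi\rVert$, so that \cref{lem:hilbert space vs trace norm} applies directly.
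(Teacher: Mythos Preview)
Your proposal is correct and follows essentially the same route as the paper: bound the Frobenius norm by the trace norm, use that the partial trace is a trace-norm contraction to pass from $\tilde\psi_X-\psi_X$ to $P_{\tilde\psi}-P_\psi$, and then invoke \cref{lem:hilbert space vs trace norm} with $D=m^3$. The paper's proof is the one-line chain $\lVert \tilde\psi_X - \psi_X \rVert_F \leq \lVert \tilde\psi_X - \psi_X \rVert_1 \leq \lVert P_{\tilde\psi} - P_\psi \rVert_1 = O(m\,2^{-b/2})$, which is exactly what you wrote, only more tersely.
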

\begin{proof}
  We have the following sequence of inequalities,
  \[ \lVert \tilde\psi_X - \psi_X \rVert_F \leq \lVert \tilde\psi_X - \psi_X \rVert_1 \leq \lVert P_{\tilde\psi} - P_\psi \rVert_1 = O(m 2^{-b/2}), \]
  where the first inequality bounds the Frobenius norm in terms of the trace norm,
  the second inequality asserts that the trace norm does not increase under the partial trace $P_\phi \mapsto \phi_X$~\cite[(9.100)]{nielsen2010quantum},
  and the last inequality is \cref{lem:hilbert space vs trace norm} applied to the Hilbert space $M = (\CC^m)^{\otimes 3}$ of dimension $D = m^3$.
\end{proof}


As a direct consequence of \cref{lem:kron truncation,eq:kron preimage}, the truncation to $b$ bits leads to an error of at most
\begin{align*}
    &\lVert (\tilde\psi_A,\tilde\psi_B,\tilde\psi_C) - (\diag \lambda_A,\diag \lambda_B,\diag \lambda_C)/k \rVert_F \\
  =\;&\big( \lVert \tilde\psi_A - \psi_A \rVert_F^2 + \lVert \tilde\psi_B - \psi_B \rVert_F^2 + \lVert \tilde\psi_C - \psi_C \rVert_F^2 \big)^{1/2}
  = O(m 2^{-b/2})
\end{align*}
Comparing with \cref{eq:kron accept}, we find that we only need to choose $b = O(m \log m + \log k)$ bits of precision to produce a certificate that the algorithm accepts.
This is polynomial in the size of the problem instance $(\lambda_A,\lambda_B,\lambda_C)$, which is $\Omega(m)$ and $\Omega(\log k)$.
Thus there exists a polynomially-sized certificate that our algorithm accepts.

\medskip

Conversely, let us assume that in fact $(\lambda_A,\lambda_B,\lambda_C)/k \not\in \Kron(m)$.
We will use the following lemma, which in colloquial terms asserts that the ``slope'' of any facet of $\Kron(m)$ is never too steep.
More precisely:

\begin{lem}
\label{lem:distance kron}
  Let $\lambda_A, \lambda_B, \lambda_C \vdash k$. Then, if $(\lambda_A, \lambda_B, \lambda_C)/k \not\in \Kron(m)$, it has Euclidean distance at least
  \[ \frac 1 k \frac 1 {(4m)^{4m}} \]
  to the Kronecker polytope.
\end{lem}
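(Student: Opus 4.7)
The plan is to convert the Siegel bound on Ressayre elements from \cref{lem:kron siegel} into a Euclidean distance bound by a direct Cauchy--Schwarz argument, using that $(\lambda_A,\lambda_B,\lambda_C)$ has integer entries and that the inequality defining any facet has integer coefficients.

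First I would use the facet description \cref{eq:kronecker ressayre} to produce a separating inequality. Since $\lambda_A,\lambda_B,\lambda_C$ are partitions, every trivial inequality $r_{X,i}\ge r_{X,i+1}$ is satisfied by $(\lambda_A,\lambda_B,\lambda_C)/k$. Therefore, if $(\lambda_A,\lambda_B,\lambda_C)/k \notin \Kron(m)$, some non-trivial facet must separate it from the polytope, and by \cref{eq:kronecker ressayre} this facet is cut out by a Ressayre element $(H,z)$, meaning $H\cdot (\lambda_A,\lambda_B,\lambda_C)/k < z$ while $H\cdot q \ge z$ for every $q\in \Kron(m)$. By \cref{lem:kron siegel}, I may take this Ressayre element to satisfy $\lVert H\rVert_\infty, \lvert z\rvert\le (4m)^{3m}$, and in particular $\lVert H\rVert_2 \le \sqrt{3m}\,(4m)^{3m}$.

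Next I would exploit integrality to get a lower bound on the violation. Since $H\in(\ZZ^m_0)^3$ and $\lambda_A,\lambda_B,\lambda_C\in\ZZ^m$, the number $H\cdot (\lambda_A,\lambda_B,\lambda_C)$ is an integer, and so is $kz$. Combined with the strict inequality $H\cdot (\lambda_A,\lambda_B,\lambda_C) < kz$ this forces $H\cdot (\lambda_A,\lambda_B,\lambda_C) \le kz - 1$, hence
\[
  z - H\cdot (\lambda_A,\lambda_B,\lambda_C)/k \;\ge\; \frac{1}{k}.
\]

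Finally I would convert this inequality gap into a Euclidean distance bound. For any $q\in\Kron(m)$ we have $H\cdot q\ge z$, so
\[
  H\cdot\bigl(q-(\lambda_A,\lambda_B,\lambda_C)/k\bigr) \;\ge\; z - H\cdot (\lambda_A,\lambda_B,\lambda_C)/k \;\ge\; \frac{1}{k},
\]
and Cauchy--Schwarz yields
\[
  \bigl\lVert q-(\lambda_A,\lambda_B,\lambda_C)/k\bigr\rVert_2 \;\ge\; \frac{1}{k\lVert H\rVert_2} \;\ge\; \frac{1}{k\sqrt{3m}\,(4m)^{3m}} \;\ge\; \frac{1}{k\,(4m)^{4m}},
\]
where the last step uses the easy bound $\sqrt{3m}\le (4m)^m$ valid for all $m\ge 1$. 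Taking the infimum over $q\in\Kron(m)$ completes the proof.

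The argument is short given what has already been set up; the only real obstacle is checking that one is indeed entitled to use a \emph{non-trivial} Ressayre facet as separator, which is why the observation that $\lambda_A,\lambda_B,\lambda_C$ being partitions automatically satisfies the trivial inequalities is essential. Everything else is a routine combination of Siegel's lemma, integrality of $H\cdot\lambda$, and Cauchy--Schwarz.
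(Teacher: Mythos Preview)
Your proposal is correct and follows essentially the same approach as the paper: pick a separating Ressayre element with the Siegel bound from \cref{lem:kron siegel}, use integrality of $H\cdot(\lambda_A,\lambda_B,\lambda_C)-kz$ to get a gap of at least $1/k$, and divide by $\lVert H\rVert_2\le\sqrt{3m}\,(4m)^{3m}$. The only cosmetic difference is that the paper invokes the point-to-hyperplane distance formula directly where you write out the Cauchy--Schwarz step, and you make explicit the observation (stated elsewhere in the paper) that the trivial Weyl-chamber inequalities are automatically satisfied because the $\lambda_X$ are partitions.
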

\begin{proof}
  Consider a non-trivial facet that separates $(\lambda_A, \lambda_B, \lambda_C)/k$ from $\Kron(m)$.
  According to \cref{lem:kron siegel}, any such facet can be described by some Ressayre element $(H,z)$ satisfying \cref{eq:kron siegel}.
  We can therefore lower-bound the distance to $\Kron(m)$ in the following way:
  \[
    \frac {\lvert (\lambda_A, \lambda_B, \lambda_C)/k \cdot H - z \rvert}{\lVert H \rVert_2}
    = \frac 1 k \frac {\lvert (\lambda_A,\lambda_C,\lambda_C) \cdot H - kz \rvert} {\lVert H \rVert_2}
    \geq \frac 1 k \frac 1 {\lVert H \rVert_2},
  \]
  where we have used that both $(H,z)$ and $(\lambda_A,\lambda_B,\lambda_C)$ have integer coefficients.
  We now use that $\lVert H \rVert_2 \leq \sqrt{3m} \lVert H \rVert_\infty \leq 3m \lVert H \rVert_\infty$ and the upper bound \cref{eq:kron siegel} to conclude that
  \[
    \frac 1 k \frac 1 {\lVert H \rVert_2}
    \geq \frac 1 k \frac 1 {3 m \lVert H \rVert_\infty}
    \geq \frac 1 k \frac 1 {3m (4m)^{3m}}
    \geq \frac 1 k \frac 1 {(4m)^{4m}}.
  \]
\end{proof}

It follows from \cref{lem:distance kron,eq:kronecker polytope geometric} that the distance of the eigenvalues $(r_A(\tilde\psi),r_B(\tilde\psi),$ $r_C(\tilde\psi))$ of the reduced density matrices of any vector $0 \neq \tilde\psi \in \CC^m \otimes \CC^m \otimes \CC^m$ to $(\lambda_A,\lambda_B,\lambda_C)/k$ is never smaller than $1/k \cdot 1 / (4m)^{4m}$.
Together with the Wielandt-Hoffmann theorem (e.g., \cite[III.6.15]{bhatia2013matrix}), this implies that
\begin{align*}
\frac1k \frac1{(4m)^{4m}}
&\leq \lVert (r_A(\tilde\psi),r_B(\tilde\psi),r_C(\tilde\psi)) - (\lambda_A,\lambda_B,\lambda_C)/k \rVert_2 \\
&\leq \lVert (\tilde\psi_A,\tilde\psi_B,\tilde\psi_C) - (\diag \lambda_A,\diag \lambda_B,\diag \lambda_C)/k \rVert_F.
\end{align*}
In view of \cref{eq:kron accept}, our algorithm will therefore never accept if $(\lambda_A,\lambda_B,\lambda_C)/k \not\in \Kron(m)$.

We conclude that the problem \textsc{KronPolytope} is in $\NP$.
\Cref{subsec:kron conp,subsec:kron np} together establish \cref{thm:kronecker}.

\section{The general membership problem}
\label{sec:general}

We now turn to the membership problem for the moment polytope associated with an arbitrary finite-dimensional unitary representation $M$ of a compact, connected Lie group $G$.

We start with some classical facts from the theory of Lie groups and Lie algebras (see, e.g., \cite{kirillov2008introduction,hall2015lie}).
Recall that the Lie algebra of $G$ is of the form $\mathfrak g = \mathfrak g_1 \oplus \dots \oplus \mathfrak g_n$, where each $\mathfrak g_j$ is either the compact real form of a complex simple Lie algebra or one-dimensional abelian.
The Lie algebra $\mathfrak g$ does not determine $G$ completely.
However, any compact, connected $G$ has a finite covering group $\tilde G$, which is a product $G_1 \times \dots \times G_n$ where each factor is either the unique (up to isomorphism) connected, simply-connected Lie group $G_j$ with simple Lie algebra $\mathfrak g_j$ or a one-dimensional torus $U(1)$ with Lie algebra $i \RR$.
Both $G$ and $\tilde G$ have the same Lie algebra and we can reconstruct $\tilde G$ from $\mathfrak g$.
Furthermore, any representation $M$ of $G$ can be extended to a representation of $\tilde G$, and it is irreducible for $G$ if and only if it is irreducible for $\tilde G$.
In particular, both groups $G$ and $\tilde G$ lead to the same moment polytope: $\Delta_G(M) = \Delta_{\tilde G}(M)$.
On the one hand, this explains why it suffices for our purposes to specify the Lie algebra only, as we will do in \cref{subsec:specification} below.
On the other hand, this also allows us to assume without loss of generality that $G = \tilde G = G_1 \times \dots \times G_n$ in our analysis, and we shall do so throughout the remainder of this section.
In particular, this allows us to make the following choices in a coherent way:

For each Dynkin diagram, we fix a corresponding compact, connected, simply-connected Lie group~$G$ and simple Lie algebra~$\mathfrak g$, and choose once and for all a maximal torus $T \subseteq G$ and positive roots $P(G)$.
Let $\mathfrak g_\CC = \mathfrak g \otimes \CC$ denote the complexification.
For each root $\alpha$, we choose a basis vector $e_\alpha \in i \mathfrak g$ in the corresponding root space of $\mathfrak g_\CC$ as well as the corresponding coroot $h_\alpha \in i \mathfrak t$ such that the commutation relations of $\mathfrak{sl}_2(\CC)$, i.e., $[e_\alpha,e_{-\alpha}] = h_\alpha$ etc., are satisfied.
The coroots $h_\alpha$ span a lattice $\Lambda_G \subseteq i \mathfrak t$ of maximal rank, with basis given by the simple coroots (i.e., those corresponding to simple roots).
The elements of the dual basis of the simple coroots are known as the fundamental weights, they form a basis of the weight lattice $\Lambda^*_G \subseteq i \mathfrak t^*$.
Together, the $e_\alpha$ for arbitrary roots $\alpha$ and the simple coroots $h_\alpha$ determine a basis of $i \mathfrak g$.
Finally, we equip $i \mathfrak g$ with the Killing form, which is positive definite on $i \mathfrak g$, normalized such that the long roots have norm one.

For a one-dimensional torus $G = U(1)$, we have $G = T$ and $i\mathfrak g = i\mathfrak t = \RR$. We identify both the weight lattice $\Lambda^*_G$ and its dual lattice $\Lambda_G$ with $\ZZ$, choose $1 \in \ZZ$ as a basis vector of both lattices, and use the standard inner product on $\RR$.

Given an arbitrary group $G = G_1 \times \dots \times G_n$, the above choices determine a maximal torus $T$,
Lie algebra $\mathfrak t$,
Weyl group $W$,
positive roots $P(G)$ and negative roots $N(G) = -P(G)$,
the lattice $\Lambda_G \subseteq i \mathfrak t$ with basis the simple coroots $h_\alpha$ together with the basis vectors of the tori,
the dual weight lattice $\Lambda^*_G \subseteq i \mathfrak t^*$ with basis the fundamental weights $\omega_i$ together with the basis vectors of the tori,
a positive Weyl chamber $i \mathfrak t^*_+ = \{ r \in i \mathfrak t^* : r(h_\alpha) \geq 0 \; \forall \alpha \in P(G) \}$,
and a $G$-invariant inner product on~$i \mathfrak g$; we will denote the induced norm by $\lVert-\rVert_{i \mathfrak g}$.
By duality, we likewise obtain an inner product and norm $\lVert-\rVert_{i \mathfrak g^*}$ on $i \mathfrak g^*$ and on $i \mathfrak t^*$.
We note that the basis vectors of $\Lambda_G$ and $\Lambda_G^*$ have norm $\Theta(1)$ by our conventions (however we caution that they are not orthogonal).
At last, we obtain a basis of~$i \mathfrak g$ by adjoining to the basis of $\Lambda_G$ the basis vectors $e_\alpha$ of the root spaces; this also determines a dual basis of $i \mathfrak g^*$.
We will make repeated use of these objects in the following.

\subsection{Specification of the problem instance}
\label{subsec:specification}

Recall that \textsc{MomentPolytope} is the problem of deciding whether a given point $\lambda/k$ is an element of some moment polytope $\Delta_G(M)$.
A problem instance of \textsc{MomentPolytope} is thus given abstractly by a quadruple $(G,M,\lambda,k)$ consisting of a group $G$, a representation $M$, a highest weight $\lambda$, and an integer $k$.
We will now describe explicitly the specification in which we assume that this data is given to an algorithm.
For this, we follow~\cite{mulmuley2007geometric}; in particular, we will write $\langle X \rangle$ for the bitsize of an object $X$.
Thus the input size of a problem instance $(G,M,\lambda,k)$ is $\Theta(\langle G \rangle + \langle M \rangle + \langle \lambda \rangle + \langle k \rangle)$.

To specify the group $G$, we recall from the discussion above that its Lie algebra is of the form $\mathfrak g = \mathfrak g_1 \oplus \dots \oplus \mathfrak g_n$, where each $\mathfrak g_i$ is either one-dimensional abelian or the compact real form of a complex simple Lie algebra.
We will therefore specify $G$ in terms of its Lie algebra by listing the summands in such a decomposition:
For each $\mathfrak g_i$, we first record in a single bit whether it is abelian or not; in the latter case, we also specify the Dynkin diagram by giving its type ($A$--$D$, or one of the five exceptional families) and rank (in unary).
Thus $\langle \mathfrak g \rangle = \Omega(R)$, where $R$ is the rank of $\mathfrak g$ (i.e., the dimension of a maximal torus of $G$).

To specify the representation $M$, we note that $G$ is reductive so that $M$ can be decomposed into irreducible representations, $M = M_1 \oplus \dots \oplus M_p$.
Each irreducible representation $M_i$ is fully determined by its highest weight $\lambda_i$.
We will therefore specify $M$ by listing the highest weights $\lambda_1, \dots, \lambda_p \in \Lambda^*_G$, each specified in terms of its coefficients in binary with respect to the basis fixed above.
Finally, we require that the dimension of $M$ encoded \emph{in unary} is part of the specification of $M$, so that $\langle M \rangle = \Omega(\dim M)$.
This is a natural assumption that allows our algorithms to take polynomial time in the dimension of the representation (note that there exist irreducible representations whose dimensions are superpolynomial in the bitsize of the specification in terms of highest weight alone, e.g., an antisymmetric representation $\bigwedge^l \mathbb C^d$ of $\SU(d)$ for $d = 2l$).
On the other hand, we stress that the assumption does \emph{not} trivialize the problem.
This can in fact already be seen in the case of the Kronecker polytopes, where the dimension of $M = (\mathbb C^m)^{\otimes 3}$ is only polynomial in $\langle G \rangle$ and therefore can be generated (in unary) in polynomial time in $\langle G \rangle = \Theta(m)$.

Finally, to specify the highest weight $\lambda$ we likewise list its coefficients with respect to the basis fixed above (in binary), and the integer $k > 0$ is also specified in binary.

\subsection{Monomial bases and representation matrices}

To generalize our algorithms in \cref{sec:kronecker} to the general case, it will be necessary to perform various Lie-theoretic computations, such as determining the multiset of weights $\Phi(M)$ as well as computing representation matrices of the Lie algebra representation on $M$.
In this section we will explain how this can be done in polynomial time.
More precisely, we will establish the following results, which may be of independent interest:

\begin{lem}
\label{lem:multiset of weights in polynomial time}
  Given $G$ and $M$ as specified in \cref{subsec:specification}, the multiset of weights $\Phi(M)$ can be computed in polynomial time (as integer vectors with respect to the basis fixed at the beginning of \cref{sec:general}).
\end{lem}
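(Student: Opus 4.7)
The plan is to reduce the problem to enumerating weights of an irreducible representation of a simple Lie group, and then apply Freudenthal's multiplicity formula in a breadth-first order.

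First, I exploit the input decomposition described in \cref{subsec:specification}. The input presents $M$ as a direct sum of irreducibles $M = M_1 \oplus \cdots \oplus M_p$, so $\Phi(M) = \bigsqcup_i \Phi(M_i)$ as multisets and it suffices to handle one $M_i$. Since $G = G_1 \times \cdots \times G_n$, each $M_i$ further decomposes as a tensor product $V_{\lambda_i^{(1)}} \otimes \cdots \otimes V_{\lambda_i^{(n)}}$ of irreducibles of the factors, with weights given by concatenating weights of each factor. Abelian factors $U(1)$ contribute a single weight. Thus it suffices to compute $\Phi(V_\lambda)$ for an irreducible $V_\lambda$ of a simple Lie group $G_j$.

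Second, I enumerate $\Phi(V_\lambda)$ by breadth-first search from the highest weight, starting with $m(\lambda) = 1$. Candidates of the form $\mu = \lambda - \sum_i n_i \alpha_i$ (with $\alpha_i$ the simple roots) are processed in increasing order of the level $\sum_i n_i$; at each level, the children of a recorded weight $\mu$ are the lattice points $\mu - \alpha_i$ not yet generated. For each candidate $\mu'$, compute its multiplicity via Freudenthal's multiplicity formula, which expresses $m(\mu')$ as a rational combination of $m(\mu' + k\alpha)$ for positive roots $\alpha$ and integers $k \geq 1$. Every weight appearing on the right-hand side has strictly smaller level than $\mu'$ and is therefore already in the table (or known to be zero). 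Candidates with $m(\mu') = 0$ are discarded; others are added to the queue. By the PBW theorem, $V_\lambda = U(\mathfrak{n}^-) v_\lambda$ with $\mathfrak{n}^-$ generated by the simple lowering operators $f_{\alpha_i}$, so every weight of $V_\lambda$ is reachable from $\lambda$ by successively subtracting simple roots through a sequence of weights of $V_\lambda$. Hence the BFS visits precisely all distinct weights of $V_\lambda$, and Freudenthal records their multiplicities. Finally I reassemble $\Phi(M)$ from the factor data by forming the Minkowski sum of weight multisets within each $M_i$ and taking the disjoint union across the $M_i$.

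Third, I verify polynomial running time. The output size $|\Phi(M)| = \dim M$ is polynomial in the input by the unary encoding of $\dim M$, so each BFS visits at most $\dim V_\lambda \leq \dim M$ distinct weights. Each Freudenthal evaluation involves a sum over $O(\operatorname{rank}(G_j)^2)$ positive roots and at most $O(\dim V_\lambda)$ shifts per root, with all arithmetic on rationals of polynomial bit-length (controlled by $\langle G \rangle + \langle M \rangle$). The Minkowski-sum and disjoint-union steps are polynomial in the output size. The main potential obstacle is preventing the BFS from wandering into regions of the weight lattice of superpolynomial size; this is precisely what the pruning at zero-multiplicity candidates accomplishes, with correctness guaranteed by the PBW generation statement and by the fact that the level-ordered evaluation of Freudenthal's formula always queries already-computed multiplicities.
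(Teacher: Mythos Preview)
Your argument is correct and takes a genuinely different route from the paper. The paper does not use Freudenthal's formula at all; instead it invokes Lakshmibai's monomial basis: for each simple factor one fixes a reduced word for the longest Weyl group element and inductively builds a set $\mathcal M_\lambda$ of monomials in the simple lowering operators $f_{\alpha_i}$ such that $\pi(\mathcal M_\lambda)v_\lambda$ is a basis of $V_\lambda$ (\cref{lem:lakshmibai}). The multiset of weights is then read off as $\{\beta(x):x\in\mathcal M_\lambda\}$. The reduction to simple factors and irreducible summands is the same as yours.

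What each approach buys: the paper's construction simultaneously yields an explicit basis of weight vectors, which it immediately reuses to prove \cref{lem:representation matrices in polynomial time}; your Freudenthal-based computation gives only the multiplicities and would require a separate argument for the representation matrices. On the other hand, your route is more elementary, relying only on a textbook recursion rather than on Demazure modules and Bruhat-order comparisons.

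One small point worth making explicit: your BFS applies Freudenthal's formula to candidates $\mu'$ that need not be weights, in order to detect $m(\mu')=0$ and prune. The formula reads $c(\mu')\,m(\mu')=\text{RHS}$ with $c(\mu')=\lVert\lambda+\rho\rVert^2-\lVert\mu'+\rho\rVert^2$; when $c(\mu')=0$ (which \emph{can} occur for non-weights $\mu'\prec\lambda$, e.g.\ $\mu'=w\cdot\lambda$ under the dot action), the quotient is undefined. The fix is immediate: since every weight $\mu\neq\lambda$ satisfies $c(\mu)>0$ (as $\lVert\mu\rVert\le\lVert\lambda\rVert$ by convexity and $\langle\lambda-\mu,\rho\rangle>0$), you may simply declare $m(\mu')=0$ whenever $c(\mu')\le 0$. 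With this convention your pruning step is fully specified and your polynomial-time bound goes through as stated.
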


\begin{lem}
\label{lem:representation matrices in polynomial time}
  Given $G$ and $M$ as specified in \cref{subsec:specification}, there exists a basis of weight vectors, indexed by $\Phi(M)$, such that the representation matrix of any of the basis vectors of $i \mathfrak g$ fixed at the beginning of \cref{sec:general} are rational and can be computed in polynomial time.
\end{lem}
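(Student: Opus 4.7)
The plan is to reduce by standard structural considerations to the essential case of a single irreducible representation of a single simple factor, and then to build a weight basis inductively starting from the highest weight vector.

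First, since $M = M_1 \oplus \cdots \oplus M_p$ decomposes into irreducibles whose highest weights can be extracted from the input, and since the representation matrix of any element of $i\mathfrak g$ on $M$ is block-diagonal with respect to this decomposition, it suffices to handle a single irreducible summand at a time. Because $G = G_1 \times \cdots \times G_n$ is a direct product, each irreducible summand further factorizes as a tensor product $V_{\mu_1} \otimes \cdots \otimes V_{\mu_n}$, where $V_{\mu_j}$ is an irreducible representation of $G_j$ whose highest weight is the $j$-th component of the highest weight of the summand. A weight basis of such a tensor product is simply the tensor product of weight bases of each factor, and the representation matrix of a basis vector of $i\mathfrak g_j$ acts as the identity on every other tensor factor. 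This reduces the problem to constructing a weight basis and the corresponding matrix entries for a single irreducible $V_\mu$ of a simple, compact, connected, simply-connected Lie group; the torus case is a trivial one-dimensional character handled by hand.

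For this essential case, I would construct the basis layer by layer using lowering operators. \Cref{lem:multiset of weights in polynomial time} supplies the multiset $\Phi(V_\mu)$ and hence the dimension $m_\nu$ of each weight space. Start with a non-zero highest weight vector $v_\mu$, on which all positive root vectors act as zero and the Cartan subalgebra acts via $\mu$. Process the remaining weights $\nu \in \Phi(V_\mu)$ in any linear extension of the partial order in which $\nu' > \nu$ whenever $\nu' - \nu$ is a non-zero sum of positive roots. For each such $\nu$, pick any simple positive root $\alpha$ with $\nu + \alpha \in \Phi(V_\mu)$ and apply the lowering operator $f_\alpha := e_{-\alpha}$ to the already-constructed basis vectors at weights $\nu' \geq \nu + \alpha$; the result is a spanning set of $V_\mu(\nu)$, from which a basis of size $m_\nu$ is extracted by Gaussian elimination. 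Each basis vector is thereby realized as an explicit $\QQ$-linear combination of monomials $f_{\alpha_{i_1}} \cdots f_{\alpha_{i_r}} v_\mu$ in simple lowering operators. The action on such a monomial of any other basis vector of $i\mathfrak g$, such as a non-simple root vector $e_\beta$ or a coroot $h_\beta$, is then computed by repeated use of the commutation relations $[e_\beta, f_\alpha] = \delta_{\beta,\alpha} h_\alpha$ and $[h_\beta, f_\alpha] = -\langle \alpha, \beta^\vee \rangle f_\alpha$ together with the Serre relations, which reduce the computation to actions already known on monomials of lower degree.

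The main obstacle is bounding the bit complexity of the intermediate rational numbers that appear. For this I would work inside the Kostant $\ZZ$-form of $V_\mu$, the admissible lattice generated from $v_\mu$ under the Kostant $\ZZ$-form of the universal enveloping algebra of $\mathfrak g_\CC$; this lattice is preserved by every Chevalley generator $e_\alpha, h_\alpha$, which therefore acts on it by an \emph{integer} matrix. Because $\dim V_\mu \leq \dim M$ is polynomial in the input size by the unary encoding convention of \cref{subsec:specification}, all matrices in play are of polynomial size, and their integer entries can be bounded polynomially in $\dim V_\mu$ using the explicit Chevalley structure constants together with the standard estimates on divided powers $f_\alpha^{(a)} = f_\alpha^a / a!$, whose denominators divide $a! \leq (\dim V_\mu)!$ and hence have polynomial bit size. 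The Gaussian elimination and commutator evaluations therefore run in polynomial time on polynomial-size matrices with rational entries of polynomial bit size, yielding the desired rational weight basis and representation matrices.
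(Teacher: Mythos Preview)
Your reduction to a single irreducible of a single simple factor is correct and matches the paper. The gap is in the inductive construction of the weight basis. You propose to record each basis vector as a $\QQ$-linear combination of lowering monomials $f_{\alpha_{i_r}} \cdots f_{\alpha_{i_1}} v_\mu$ and then, at each new weight $\nu$, to extract a basis of $V_\mu(\nu)$ from a spanning family by Gaussian elimination. But Gaussian elimination requires being able to test whether a given formal linear combination of monomials represents the zero vector of $V_\mu$, and the linear relations among such monomials in $V_\mu$ are exactly what you are trying to compute: they generate the maximal proper submodule of the Verma module. Working purely with formal expressions gives you only linear algebra in the Verma module, so you may select $m_\nu$ vectors that are independent there yet dependent in $V_\mu$, or conversely fail to recognise that your spanning family is already a basis. (There is also a smaller slip: for a \emph{single} simple root $\alpha$, the image $f_\alpha V_\mu(\nu+\alpha)$ need not span $V_\mu(\nu)$; one must use all simple $\alpha$ simultaneously.) A standard cure is to compute the contravariant (Shapovalov) form on the Verma module and pass to the quotient by its radical, but you do not invoke this, and bounding the bit-sizes involved in terms of $\dim M$ would itself require a separate argument.

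The paper avoids any linear algebra in an unknown space. For the classical types $A$--$D$ it uses the Gelfand--Tsetlin bases (and Molev's basis for type $C$), which come with explicit closed-form rational formulas for every matrix entry of every simple generator, so no elimination is ever needed. For the five exceptional types it fixes once and for all an embedding $\mathfrak g_\CC \hookrightarrow \mathfrak{gl}_a(\CC)$ with $a$ bounded, realises $M_\lambda$ inside a graded piece $\mathcal O(\mathfrak{gl}_a(\CC))_l$ of an honest polynomial ring, and carries out all computations in the standard monomial basis of that ring---an ambient coordinate system of size $\poly(\dim M)$ in which linear dependence is just matrix rank. Lakshmibai's monomial basis is used only to locate the copy of $M_\lambda$ inside this concrete space.
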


It is plain that the set of negative roots $N(G)$ can also be computed in polynomial time.

\medskip

For the classical Lie groups, \cref{lem:multiset of weights in polynomial time,lem:representation matrices in polynomial time} can be established using well-known properties of Gelfand-Tsetlin or Molev patterns~\cite{gelfand1,gelfand2,molev1,molev2}. 
We will give a different proof, based on Lakshmibai's notion of a monomial basis of an irreducible representation~\cite{lakshmibai}, which
can be understood as a generalization of the Gelfand-Tsetlin basis to general semisimple complex Lie algebras.
This allows for a uniform proof of \cref{lem:multiset of weights in polynomial time} for all types, including the exceptional Lie groups.
Moreover, our proof of \cref{lem:representation matrices in polynomial time} for the exceptional Lie groups relies crucially on using monomial bases in order to reduce to type $A_n$.

In the following discussion, we shall assume that $\mathfrak g_\CC$ is simple and that $M$ is an irreducible representation (we will see below that this is without loss of generality).
Let $PS(G) \subseteq P(G)$ denote the set of simple roots.
For any $\alpha \in PS(G)$, let us denote by $\mathfrak{sl}_2^\alpha \subseteq \mathfrak g_\CC$ the corresponding copy of $\mathfrak{sl}_2(\CC)$, spanned by the generators $e_\alpha$, $f_\alpha := e_{-\alpha}$, and $h_\alpha$ that we had fixed at the beginning of \cref{sec:general}, and by $s_\alpha \in W$ the corresponding simple reflection.
Let $w \in W$ denote the longest Weyl group element, $\ell = \ell(w)$ its length, and fix a reduced decomposition into simple reflections, $w = s_{\alpha_\ell} \dots s_{\alpha_1}$.
Set $w_0 := 1$ and $w_r := s_{\alpha_r} \dots s_{\alpha_1}$ for any $r=1,\dots,\ell$, so that $w_\ell = w$.

Now suppose that $M = M_\lambda$ is an irreducible representation of highest weight $\lambda$.
We will denote the representation of the Lie algebra $\mathfrak g_\CC$ as well as its extension to the universal enveloping algebra by~$\pi$.
The highest weight vector $v_\lambda$ is an eigenvector of~$\mathfrak b$, the Borel subalgebra of $\mathfrak g_\CC$.
For any Weyl group element $w' \in W$, let $M_{\lambda,w'} \subseteq M_\lambda$ denote the corresponding \emph{Demazure module}, i.e., the $\mathfrak b$-module generated by $w' \cdot v_\lambda$. Note that $M_{\lambda,1} = \CC v_\lambda$, while $M_{\lambda,w_\ell} = M_\lambda$.
We will now describe Lakshmibai's inductive construction of sets $\mathcal M_{\lambda,w_r}$ of monomials of the form $x = f_{\alpha_s}^{n_s} \cdots f_{\alpha_1}^{n_1}$, with $\alpha_i \in PS(G)$ and each $n_i > 0$, such that each $\mathcal B_{\lambda,w_r} := \pi(\mathcal M_{\lambda,w_r}) v_\lambda$ is a basis of the Demazure module $M_{\lambda,w_r}$.
It will be useful to define $w(x) := s_{\alpha_s} \cdots s_{\alpha_1}$ and $\beta(x) := \lambda - \sum_{i=1}^s n_i \alpha_i$ for any such monomial.
For $r = 0$, we define $\mathcal M_{\lambda,w_0} := \{1\}$. Thus $\mathcal B_{\lambda,w_r} = \{ v_\lambda \}$ in this case.
For $r > 0$, we consider
\begin{equation}
\label{eq:monomials for induction}
  \mathcal N_{\lambda,w_{r-1}} := \{ x \in \mathcal M_{\lambda,w_{r-1}} : w_{r-1} \not\geq s_{\alpha_r} w(x)
 \},
\end{equation}
where $\geq$ denotes the Bruhat order, noting that $\mathcal M_{\lambda,w_{r-1}}$ has already been defined by the induction hypothesis.
For each $x \in \mathcal N_{\lambda,w_{r-1}}$, let $t_x$ denote the weight of $\pi(x) v_\lambda$ with respect to $\mathfrak{sl}_2^{\alpha_r}$, which is a positive integer~\cite{lakshmibai}.
That is, $t_x = \beta(x)(h_{\alpha_r})$, since $\beta(x)$ is the $\mathfrak g_\CC$-weight of $\pi(x) v_\lambda$.
We finally define
\begin{equation}
\label{eq:monomial basis}
  \mathcal M_{\lambda,w_r} :=
  \mathcal M_{\lambda,w_{r-1}}
  \cup
  \{ f_{\alpha_r}^i x : x \in \mathcal N_{\lambda,w_{r-1}}, i = 1, \dots, t_x \},
\end{equation}
which is a disjoint union.
Lakshmibai has shown that, for each $r=1,\dots,\ell$, $\mathcal B_{\lambda,w_r} = \pi(\mathcal M_{\lambda,w_r}) v_\lambda$ is a basis of the Demazure module $M_{\lambda,w_r}$ (see~\cite[Theorem 4.1]{lakshmibai} and its proof).
In particular, $\mathcal B_\lambda := \mathcal B_{\lambda,w_\ell}$ is a basis of the irreducible representation $M = M_\lambda$, indexed by the monomials in $\mathcal M_\lambda := \mathcal M_{\lambda,w_\ell}$. We call $\mathcal B_\lambda$ the \emph{Lakshmibai monomial basis} of $M_\lambda$.
From the perspective of computational complexity, the crucial observation is that $\mathcal B_\lambda$ can be constructed in polynomial time:

\begin{lem}
\label{lem:lakshmibai}
  Given $G$ and $M = M_\lambda$ as specified in \cref{subsec:specification}, the set of monomials $\mathcal M_\lambda$ can be constructed in polynomial time.
\end{lem}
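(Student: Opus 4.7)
The plan is to implement Lakshmibai's inductive construction directly, following equations~\cref{eq:monomials for induction,eq:monomial basis}. We may assume $\mathfrak g_\CC$ is simple and $M = M_\lambda$ irreducible, the reduction being immediate from the given decomposition $M = M_1 \oplus \cdots \oplus M_p$. Along with each monomial $x = f_{\alpha_s}^{n_s} \cdots f_{\alpha_1}^{n_1} \in \mathcal M_{\lambda, w_r}$ we maintain its weight $\beta(x) = \lambda - \sum_i n_i \alpha_i \in \Lambda_G^*$ (a vector of length $R$) and its Weyl element $w(x) = s_{\alpha_s} \cdots s_{\alpha_1} \in W$, stored as a reduced word of length at most $\ell = |P(G)| = O(R^2)$. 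The outer loop runs for $\ell$ steps; step $r$ enumerates the current set $\mathcal M_{\lambda, w_{r-1}}$, selects those $x$ satisfying the Bruhat test $w_{r-1} \not\geq s_{\alpha_r} w(x)$, and for each surviving $x$ appends the block $\{ f_{\alpha_r}^i x : 1 \leq i \leq t_x \}$ with $t_x = \beta(x)(h_{\alpha_r})$, updating $\beta$ and $w$ in $O(R)$ time per new monomial.

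The central reason this is polynomial is the nested structure $\mathcal B_{\lambda, w_0} \subseteq \mathcal B_{\lambda, w_1} \subseteq \cdots \subseteq \mathcal B_{\lambda, w_\ell} = \mathcal B_\lambda$ of bases of the Demazure modules $M_{\lambda, w_r} \subseteq M_\lambda$. Hence every intermediate $\mathcal M_{\lambda, w_r}$ is contained in $\mathcal M_\lambda$ and has cardinality at most $\dim M_\lambda$. Since the encoding of $M$ includes $\dim M$ in unary (\cref{subsec:specification}), this bound is polynomial in $\langle M \rangle$, and so is the total number of monomials the algorithm ever touches across all $\ell$ iterations. The required auxiliary routines are all standard: a reduced word for $w_0$ is produced by first enumerating the $O(R^2)$ positive roots from the Dynkin diagram and then iteratively applying a length-reducing simple reflection (such an $\alpha$ exists whenever the current element differs from $1$, and it is detected by inspecting the images of the simple roots); Bruhat comparison between two reduced words is decidable in polynomial time via the subword property; and $t_x = \beta(x)(h_{\alpha_r})$ is a single $O(R)$-time inner product on $i \mathfrak t^*$.

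The main technical point to verify is that the integers $t_x$, and hence the block sizes, remain polynomially bounded throughout the recursion: a naive reading of the induction might suggest that the exponents $n_i$ could accumulate. However, $\beta(x)$ is a weight of $M_\lambda$ and therefore lies in the convex hull of the Weyl orbit of $\lambda$, so its coordinates in the fundamental-weight basis are bounded polynomially in $\dim M_\lambda$ (by the Weyl dimension formula, the largest component of $\lambda$ itself is polynomial in $\dim M_\lambda$). Consequently $t_x = O(\dim M_\lambda)$ for every $x$ and every $r$. Combined with the cardinality bound of the previous paragraph, this ensures that both the total work across the $\ell$ iterations and the bitsize of every object stored are polynomial in $\langle G \rangle + \langle M \rangle + \langle \lambda \rangle$, establishing the lemma.
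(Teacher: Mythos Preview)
Your proposal is correct and follows essentially the same route as the paper: implement Lakshmibai's recursion, bound each $\#\mathcal M_{\lambda,w_r}$ by $\dim M_\lambda \leq \langle M\rangle$, and appeal to a polynomial-time decision procedure for Bruhat order. Two minor remarks: first, your separate paragraph bounding $t_x$ is superfluous, since the disjoint union in \cref{eq:monomial basis} already gives $\sum_{x\in\mathcal N_{\lambda,w_{r-1}}} t_x = \#\mathcal M_{\lambda,w_r}-\#\mathcal M_{\lambda,w_{r-1}}\leq\dim M_\lambda$, hence each $t_x\leq\dim M_\lambda$ immediately; second, the fact that the words $s_{\alpha_s}\cdots s_{\alpha_1}$ for $w(x)$ and $s_{\alpha_r}w(x)$ are \emph{reduced} (needed for the subword criterion) is a nontrivial consequence of Lakshmibai's construction that you use implicitly and should cite.
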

\begin{proof}
  Since the length $\ell = \ell(w)$ of the longest Weyl group element is $\poly(\langle G \rangle)$, it suffices to show that, for each $r=1,\dots,\ell$, $\mathcal M_{\lambda,w_r}$ can be computed from $\mathcal M_{\lambda,w_{r-1}}$ in time $\poly(\langle G \rangle, \langle M \rangle)$.
  We first argue that $\mathcal N_{\lambda,w_{r-1}}$ as defined in \cref{eq:monomials for induction} can be constructed in polynomial time.
  For this, we note that both $w_{r-1}$ and $s_{\alpha_r} w(x)$ for any $x \in \mathcal M_{\lambda,r-1}$ are given by their reduced decompositions~\cite{lakshmibai}.
  But for any two Weyl group elements $w',w'' \in W$, given by their reduced decompositions, it can be decided in $\poly(\langle G \rangle)$ time whether $w' \leq w''$ (if $\mathfrak g$ is of classical type, this result can be deduced from~\cite[Theorems 5A, 5BC, 5D]{proctor}; the five exceptional Lie algebras can be treated separately in constant time).
  Since $\#\mathcal M_{\lambda,r-1} = \dim M_{\lambda,w_{r-1}} \leq \dim M_\lambda \leq \langle M \rangle$, it is now easy to see that $\mathcal N_{\lambda,w_{r-1}}$ can be constructed in time $\poly(\langle G \rangle, \langle M \rangle)$.
  After this, $\mathcal M_{\lambda,r}$ can be constructed via \cref{eq:monomial basis} likewise in $\poly(\langle G \rangle, \langle M \rangle)$.
\end{proof}

We now establish \cref{lem:multiset of weights in polynomial time,lem:representation matrices in polynomial time}:

\begin{proof}[Proof of \cref{lem:multiset of weights in polynomial time}]
  If $\mathfrak g$ is the compact real form of simple Lie algebra and $M$ is irreducible then this follows directly from \cref{lem:lakshmibai}: First compute $\mathcal M_\lambda$ and then add the weight $\beta(x)$ of $\pi(x) v_\lambda$ for all $x \in \mathcal M_\lambda$ into the multiset.
  If $\mathfrak g$ is one-dimensional abelian and $M$ irreducible then there is only a single weight, which we already know from the specification of $M$.
  If $\mathfrak g = \mathfrak g_1 \oplus \dots \oplus \mathfrak g_n$ is a direct sum of such Lie algebras and $M$ irreducible, then any irreducible representation is a tensor product of irreducible $\mathfrak g_i$-representations for $i=1,\dots,n$, and the multiset of weights can be identified with the Cartesian product of the multiset of weights of its constituents, which can be computed in polynomial time.
  Finally, if $M$ is reducible we apply the above procedure to each irreducible summand in its specification.
\end{proof}

\begin{proof}[Proof of \cref{lem:representation matrices in polynomial time}]
  We may likewise assume that $\mathfrak g_\CC$ is simple and $M$ is irreducible, i.e., $M = M_\lambda$ for some highest weight $\lambda$.
  It moreover suffices to show that the representation matrices of $e_\alpha$, $f_\alpha = e_{-\alpha}$, and $h_\alpha$ for the simple roots $\alpha \in PS(G)$ can be computed in polynomial time.
  Indeed, if $\beta \in P(G)$ is not simple then $e_\beta$ and $f_\beta$ can computed from the above by using $\poly(\langle G \rangle)$ many Lie brackets.
  We proceed case by case:

  \emph{Type A} ($\mathfrak g_\CC = \mathfrak{gl}_n(\CC)$):
  Let $\operatorname{GT}^*_\lambda$ denote the set of Gelfand-Tsetlin patterns, which can be computed in time $\poly(\langle G \rangle, \dim M_\lambda)$.
  Let $\operatorname{GT}_\lambda$ the corresponding Gelfand-Tsetlin basis; it is a basis of weight vectors.
  Using the explicit formulas given in~\cite{gelfand1,molev1} (cf.\ Theorem 2.3 in~\cite{molev1}), each matrix element of the corresponding representation matrices of $e_\alpha$, $f_\alpha$, and $h_\alpha$ can be computed in polynomial time given the pair of Gelfand-Tsetlin patterns indexing the entry.
  It follows that these representation matrices can be computed in time $\poly(\langle G \rangle, \dim M_\lambda)$.

  \emph{Types B, D} ($\mathfrak g_\CC = \mathfrak{so}_n(\CC)$):
  Using the corresponding Gelfand-Tsetlin basis~\cite{gelfand2}, the proof proceeds as for type A.

  \emph{Type C} ($\mathfrak g_\CC = \mathfrak{sp}_n(\CC)$):
  In this case, we can use Molev's basis~\cite{molev2} which has similar properties.

  \emph{Exceptional Types}:
  Since there are only finitely many exceptional Lie groups, they can all be embedded into some fixed $\GL_a(\CC)$.
  Correspondingly, $\mathfrak g_\CC$ can be embedded into $\mathfrak{gl}_a(\CC)$; let us fix such an embedding for each type.
  Now let $\mathcal O(\mathfrak{gl}_a(\CC))_l$ denote the degree-$l$ part of the coordinate ring of $\mathfrak{gl}_a(\CC)$.
  Write $\lambda = \sum_i \lambda_i \omega_i$, where the $\omega_i$ are the fundamental weights of $\mathfrak g$.
  Let $d := \sum_i d_i$; it is easy to see using Weyl's dimension formula that $d = O(\dim M_\lambda)$.
  It can be shown that $M_\lambda$ occurs as a subrepresentation of $M' := \mathcal O(\mathfrak{gl}_a(\CC))_l$ for some $l = O(d) = O(\dim M_\lambda)$ that can be computed explicitly from $\lambda$.
  The dimension of $M'$ is equal to the number of monomials of degree $l$ in $a^2$ variables, and therefore $O(l^{a^2}) = O(\poly(\dim M))$, since $a$ is a constant.

  We now compute the representation matrices of the generators of $\mathfrak{gl}_a(\CC)$ with respect to the basis of $M'$ consisting of the usual degree-$l$ monomials in $a^2$ variables, which we will denote by $\mathcal B'$.
  From this, we in turn obtain the representation matrices for the generators of $\mathfrak g_\CC$ with respect to the same basis by using the explicit embedding fixed above.
  After this, we can compute a basis of highest weight vectors for $G$ by finding those weight vectors that are annihilated by the action of the $e_\alpha$ for $\alpha \in PS(G)$.
  For each highest weight vector, we then compute the corresponding Lakshmibai monomial basis expressed in terms of the usual monomial basis by using \cref{lem:lakshmibai}.
  In this way we obtain a second basis $\mathcal B$ of $M'$.
  At last, we compute the inverse change of basis matrix and use it to express the representation matrices for the generators of $\mathfrak g_\CC$ with respect to the basis $\mathcal B$.
  By restricting these representation matrices to the Lakshmibai monomial basis $\mathcal B_\lambda \subseteq \mathcal B$ corresponding to a copy of $M = M_\lambda$ in $M'$, we finally obtain the desired representation matrices.
  All this can be done in time $O(\poly(\dim M')) = O(\poly(\dim M))$.
\end{proof}

\subsection{Inequalities for moment polytopes}

We now recall the description of the moment polytope $\Delta_G(M)$ from~\cite{vergnewalter2014inequalities}.
Let $\Phi(M) = \{ \varphi_1, \dots, \varphi_D \}$ denote the multiset of weights of the representation~$M$, where $D = \dim M$, and consider a corresponding decomposition of $M$ into one-dimensional weight spaces, $M = \bigoplus_{i=1}^D \CC \psi_i$, where $\psi_i$ is a weight vector of weight $\varphi_i$.
Recall that $N(G)$ denotes the set of negative roots of $G$ as defined at the beginning of \cref{sec:general}.
For any $H \in \Lambda_G$ and $z \in \ZZ$, we now define the following three sub(multi)sets:
\begin{align*}
  \Phi(H = z) &= \{ \varphi \in \Phi(M) : \varphi(H) = z \}, \\
  \Phi(H < z) &= \{ \omega \in \Phi(M) : \omega(H) < z \}, \\
  N(H < 0) &= \{ \alpha \in N(G) : \alpha(H) < 0 \}.
\end{align*}

For each root $\alpha$, we had defined a basis vector $e_\alpha \in i \mathfrak g$ in the corresponding root space.
Let $E_\alpha$ denote the linear operator given by the (complexified) representation of the Lie algebra of $G$ on $M$.
We will call $E_\alpha$ the \emph{root operator} corresponding to the root $\alpha$.

\begin{dfn}[\cite{vergnewalter2014inequalities}]
\label{dfn:ressayre general}
  A \emph{Ressayre element} is a pair $(H,z)$, where $H \in \Lambda_G$ and $z \in \ZZ$, such that the following conditions are satisfied:
  \begin{enumerate}
    \item \emph{Admissibility:} The points in $\Phi(H=z)$ span an affine hyperplane in $i \mathfrak t^*$.
    \item \emph{Trace condition:} $\# N(H<0) = \# \Phi(H<z)$.
    \item \emph{Determinant condition:}
    Consider a weight vector $\psi_j$ of weight $\varphi_j \in \Phi(H=z)$. Its image under a root operator $E_\alpha$ for $\alpha \in N(H<0)$ is necessarily a weight vector of some weight $\varphi_j + \alpha \in \Phi(H<z)$.
    Thus we can write
    \[ E_\alpha \psi_j = \sum_{i : \varphi_i = \varphi_j + \alpha} (D_{H,z,j})_{i,\alpha} \psi_i, \]
    whereby we obtain a matrix $D_{H,z,j}$ whose rows are indexed by integers $i$ with $\varphi_i \in \Phi(H<z)$ and whose columns are indexed by roots $\alpha \in N(H<0)$.
    Let $D_{H,z}$ denote the polynomial matrix
    \begin{equation}
      \label{eq:general det matrix}
      D_{H,z} = \!\!\!\! \sum_{j : \varphi_j \in \Phi(H=z)} D_{H,z,j} \, X_j
    \end{equation}
    in variables $X_j$.
    By the trace condition, $D_{H,z}$ is a square matrix, so that we can form the \emph{determinant polynomial} $d_{H,z} := \det D_{H,z}$, and the condition is that $d_{H,z}$ should be non-zero. 
  \end{enumerate}
\end{dfn}

As in \cref{subsec:kronecker ieqs}, we observe that the number of Ressayre elements is finite (up to overall rescaling), and -- assuming that $\Delta_G(M)$ is maximal-dimensional -- we have the following description of the moment polytope in terms of finitely many inequalities~\cite{vergnewalter2014inequalities}:
\begin{equation}
\label{eq:general ressayre}
  \Delta_G(M) = \{ r \in i \mathfrak t^*_+ : r(H) \geq z \text{ for all Ressayre elements $(H,z)$} \}
\end{equation}
We will call a facet of $\Delta_G(M)$ \emph{non-trivial} if it is \emph{not} a defining inequality of the Weyl chamber, i.e., if it is \emph{not} of the form $\lambda(h_\alpha) > 0$ for any of the simple coroots $h_\alpha$.
\Cref{eq:general ressayre} implies that any non-trivial facet is necessarily given by a Ressayre element.


\subsection{\textsc{MomentPolytope} is in coNP}
\label{subsec:general conp}

A problem instance for \textsc{MomentPolytope} is given by a quadruple $(G,M,\lambda,k)$ encoded as described in \cref{subsec:specification} above.
We now describe a polynomial-time algorithm that takes as input the problem instance together with a certificate that consists of a triple $(H,z,p)$, where $H \in \Lambda_G$, $z \in \ZZ$, and $p \in \ZZ^{\#\Phi(H=z)}$.
Here, $H$ is specified as an integer vector with respect to the bases fixed at the beginning of \cref{sec:general}.

The algorithm proceeds as follows:
We first check the conditions in \cref{dfn:ressayre general} to verify that $(H,z)$ is a Ressayre element for $\Delta_G(M)$:
\begin{enumerate}
  \item Admissibility:
    There are $\#\Phi(M) = \dim M = O(\langle M \rangle)$ weights, each living in a space of dimension $\dim T = O(\langle G \rangle)$.
    According to \cref{lem:multiset of weights in polynomial time}, we can compute the multiset of weights $\Phi(M)$ in polynomial time.
    For each weight $\varphi\in\Phi(M)$, we can determine if $\varphi\in\Phi(H=z)$ by verifying that $\varphi(H) = z$, which amounts to evaluating an inner product in $\ZZ^{\dim T}$.
    Thus we can in polynomial time determine $\Phi(H=z)$ and compute the rank of the polynomial-size matrix with columns $\begin{psmallmatrix}\varphi \\ -1\end{psmallmatrix}$ for $\varphi\in\Phi(H=z)$.
    The element $(H,z)$ is admissible if and only if the rank is equal to $\dim T$.
  \item Trace condition:
    As there are no more than $O(\langle G \rangle^2)$ negative roots and $\dim M = O(\langle M \rangle)$ weights, each of which lives in a space of dimension $\dim T = O(\langle G \rangle)$ and can be constructed efficiently (\cref{lem:multiset of weights in polynomial time}), both cardinalities can be computed and compared in polynomial time.
  \item Determinant:
    We construct the matrix $D_{H,z}(p)$ defined in \cref{eq:general det matrix} for $X = p$.
    The matrix is of polynomial size and can be constructed efficiently (\cref{lem:representation matrices in polynomial time}).
    We can therefore compute its determinant $d_{H,z}(p)$ exactly in polynomial time.
    We accept if and only if $d_{H,z}(p) \neq 0$.
\end{enumerate}
At this point we are sure that $(H,z)$ defines a non-trivial facet of the moment polytope (the trivial inequalities are automatically satisfied since $\lambda$ is a highest weight and therefore an element of the positive Weyl chamber $i \mathfrak t^*_+$).
In the last step of the algorithm, we verify that this facet indeed separates $\lambda/k$ from the polytope by checking that $H \cdot \lambda < k z$.
It is clear that the algorithm will accept only if $\lambda/k \not\in \Delta_G(M)$.

We will now show that, conversely, if $\lambda/k \not\in \Delta_G(M)$ then there always exists a polynomial-sized certificate $(H,z,p)$ such that the algorithm accepts.
For this, we derive the following estimates:

\begin{lem}
\label{lem:coeff bound}
  Let $\varphi \in \Phi(M)$. Then $\lVert \varphi \rVert_\infty \leq 2^{\langle M \rangle}$, where we think of $\varphi$ as an integer vector with respect to the basis of $\Lambda^*_G$ fixed at the beginning of \cref{sec:general}.
\end{lem}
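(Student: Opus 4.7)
The plan is to reduce to the case that $M$ is irreducible and then handle separately the two kinds of basis vectors that make up the chosen basis of $\Lambda^*_G$, namely the fundamental weights $\omega_\alpha$ associated to simple factors of $\mathfrak g$ and the distinguished generators of the weight lattice of each one-dimensional torus factor. Since the weight multiset of $M$ is the disjoint union of the weight multisets of its irreducible summands $M_1,\dots,M_p$ with highest weights $\lambda_1,\dots,\lambda_p$, each bounded in bitsize by $\langle M \rangle$, it suffices to prove the estimate when $\varphi$ is a weight of some single $M_i$.

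For a coordinate corresponding to a fundamental weight $\omega_\alpha$ of a simple factor, the coefficient of $\varphi$ in this direction is precisely $\varphi(h_\alpha)$, where $h_\alpha$ is the associated simple coroot. Since $\varphi$ is realized on a nonzero weight vector $\psi \in M$, the integer $\varphi(h_\alpha)$ is a weight of the finite-dimensional $\mathfrak{sl}_2^\alpha$-representation obtained by restricting the $G$-representation on $M$ to $\mathfrak{sl}_2^\alpha$. From the classification of irreducible $\mathfrak{sl}_2$-representations, every weight of a $d$-dimensional $\mathfrak{sl}_2$-representation has absolute value at most $d-1$, so we obtain $|\varphi(h_\alpha)| \leq \dim M - 1$. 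Because the specification convention fixed in \cref{subsec:specification} records $\dim M$ in unary, this already yields $|\varphi(h_\alpha)| \leq \langle M \rangle \leq 2^{\langle M \rangle}$.

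For a coordinate corresponding to the generator of a one-dimensional torus factor of $G$, the situation is rather different: the torus acts by a single scalar character on the entire irreducible summand $M_i$, and therefore every weight of $M_i$ restricts to the same character on that torus as $\lambda_i$ does. Consequently the coefficient of $\varphi$ in this direction equals the corresponding coefficient of the highest weight $\lambda_i$, which is encoded in binary and hence bounded in absolute value by $2^{\langle \lambda_i \rangle} \leq 2^{\langle M \rangle}$.

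Combining the two bounds gives $\lVert \varphi \rVert_\infty \leq 2^{\langle M \rangle}$, with the torus contribution being the one that makes the exponential bound necessary. No genuine obstacle arises; the only care required is to decompose $\mathfrak g$ cleanly as a sum of simple and abelian factors and to use the right encoding convention (unary for $\dim M$, binary for the highest weights) in each of the two sub-cases.
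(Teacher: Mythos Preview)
Your proof is correct, and it takes a genuinely different route from the paper's. The paper reduces to $\mathfrak g$ simple and $M=M_\nu$ irreducible, then uses that the convex hull of $\Phi(M)$ equals the convex hull of the Weyl orbit $W\cdot\nu$ to obtain $\lvert\varphi(h_\alpha)\rvert\leq\max_{w\in W}\lvert\nu(w\cdot h_\alpha)\rvert$, and concludes $\lVert\varphi\rVert_\infty\leq\lVert\nu\rVert_\infty\leq 2^{\langle M\rangle}$ from the binary encoding of the highest weight. You instead restrict to $\mathfrak{sl}_2^\alpha$ and invoke the classification of its finite-dimensional representations to get $\lvert\varphi(h_\alpha)\rvert\leq\dim M-1$, and then use the \emph{unary} encoding of $\dim M$ to finish. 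Your argument is more elementary (no need for the Weyl-hull fact) and in fact gives the sharper bound $\lvert\varphi(h_\alpha)\rvert\leq\langle M\rangle$ on the semisimple coordinates; you also handle the abelian torus factors explicitly, which the paper's reduction to simple $\mathfrak g$ glosses over. The trade-off is that the paper's approach does not rely on the unary dimension being part of the input, so it would survive under the leaner encoding discussed in \cref{subsec:discussion of results}, whereas your bound on the semisimple coordinates would not.
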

\begin{proof}
  We may assume that $\mathfrak g$ is simple and that $M$ is an irreducible representation of highest weight~$\nu$.
  In this case, $\varphi$ is specified with respect to the basis of fundamental weights, i.e., the coefficients of $\varphi$ are given by $\varphi(h_\alpha)$ where $\alpha$ ranges over the simple coroots.
  To bound $\varphi(h_\alpha)$, we use the classical fact that the convex hull of the weights $\Phi(M)$ is equal to the convex hull of the Weyl group orbit $W \cdot \nu$ of the highest weight (e.g., \cite[Theorem 7.41]{hall2015lie}).
  Therefore,
  \[
         \lvert \varphi(h_\alpha) \rvert
    \leq \max_{w \in W} \lvert \nu(w \cdot h_\alpha) \rvert
    \leq \max_{h_\beta} \lvert \nu(h_\beta) \rvert = \lVert \nu \rVert_\infty,
  \]
  where the last maximization is over all simple coroots $h_\beta$.
  This shows that $\lVert \varphi \rVert_\infty \leq  \lVert \nu \rVert_\infty$.
  As $M$ is specified in terms of the coefficients of the highest weight $\nu$ given \emph{in binary}, this shows that
  \[
  	\lVert \nu \rVert_\infty \leq 2^{\langle\nu\rangle} \leq 2^{\langle M \rangle}. 
  \]
\end{proof}

\begin{lem}
\label{lem:general siegel}
  Any non-trivial facet of $\Delta_G(M)$ can be described by a Ressayre element $(H,z)$ with
  \begin{equation}
  \label{eq:general siegel}
    \max~\{ \lVert H \rVert_\infty, \lvert z \rvert \} \leq 2^{\langle G \rangle (\log (\langle G \rangle + 1) + \langle M\rangle)},
  \end{equation}
  where we think of $H$ as an integer with respect to the basis of $\Lambda_G$ fixed at the beginning of \cref{sec:general}.
\end{lem}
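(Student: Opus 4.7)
My plan is to imitate the proof of \cref{lem:kron siegel}, with the lattice $\Lambda_G$ playing the role that $(\ZZ^m_0)^3$ played in the Kronecker case. First, by \cref{eq:general ressayre}, an arbitrary non-trivial facet of $\Delta_G(M)$ is defined by some Ressayre element $(H_0, z_0)$. The admissibility condition of \cref{dfn:ressayre general} asserts that $\Phi(H_0 = z_0)$ spans an affine hyperplane in the $R$-dimensional space $i\mathfrak{t}^*$, where $R = \dim T$ is the rank of $G$. I would extract $R$ affinely independent weights $\omega_1, \dots, \omega_R \in \Phi(H_0 = z_0)$ and write $H$ as an integer vector in the basis of $\Lambda_G$ and each $\omega_j$ in the dual basis of $\Lambda_G^*$ (both fixed at the beginning of \cref{sec:general}), so that the defining relations $\omega_j(H) = z$ become the linear system
\[
  \omega_j(H) - z = 0 \qquad (j = 1, \dots, R)
\]
in $N = R + 1$ integer unknowns. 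In contrast to the Kronecker case no auxiliary trace-zero equations are needed, since $\Lambda_G$ is already $R$-dimensional.

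Next, I would apply Siegel's lemma. By \cref{lem:coeff bound}, every coefficient of the above system is bounded in absolute value by $B := 2^{\langle M \rangle}$. With $M = R$ equations and $N - M = 1$ unknowns, the standard form of Siegel's lemma~\cite{hindry2000diophantine} produces a nonzero integer solution $(H, z)$ with
\[
  \max~\{ \lVert H \rVert_\infty, \lvert z \rvert \} \leq (NB)^{M/(N-M)} = \big( (R+1)\,2^{\langle M \rangle} \big)^R = 2^{R(\log(R+1) + \langle M \rangle)}.
\]
Since the encoding of $G$ in \cref{subsec:specification} spends at least one bit per unit of rank (each simple summand has its rank given in unary, each one-dimensional summand contributes a single bit), we have $R \leq \langle G \rangle$, and \eqref{eq:general siegel} follows by monotonicity in $R$.

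The main, albeit mild, obstacle is to verify that the Siegel solution $(H, z)$ is itself a Ressayre element describing the same facet as $(H_0, z_0)$. Affine independence of the $\omega_j$ makes the coefficient matrix of the above system have rank $R$, so its solution space is a single line; hence $(H, z) = c\,(H_0, z_0)$ for some rational $c$, which after possibly negating $(H, z)$ may be taken positive. Each of the three conditions in \cref{dfn:ressayre general}---admissibility, the trace condition, and the determinant condition---depends only on the sets $\Phi(H=z)$, $\Phi(H<z)$, and $N(H<0)$, all of which are invariant under positive rescaling of $(H, z)$. Thus $(H, z)$ inherits the Ressayre property from $(H_0, z_0)$ and defines the same facet inequality, completing the plan.
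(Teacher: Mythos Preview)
Your proposal is correct and follows essentially the same approach as the paper: extract $R$ affinely independent weights from $\Phi(H=z)$, set up the $R\times(R+1)$ integer system $\omega_j(H)-z=0$, bound the coefficients via \cref{lem:coeff bound}, and apply Siegel's lemma together with $R\le\langle G\rangle$. Your final paragraph, verifying that the Siegel solution is a positive rescaling of $(H_0,z_0)$ and hence again a Ressayre element defining the same facet, makes explicit a step the paper leaves implicit.
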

\begin{proof}
  The admissibility condition in \cref{dfn:ressayre general} implies that any Ressayre element $(H,z)$ is the normal vector of an affine hyperplane in $i \mathfrak t^*$ spanned by some affinely independent set of weights $\varphi_1, \dots, \varphi_R \in \Phi(H=z)$, where $R = \dim T$.
  Therefore, $(H,z) \in \ZZ^{R+1}$ is an integral solution to the following linear system of equations:
  \begin{align*}
    &\varphi_1(H) - z = 0, \;\dots,\; \varphi_R(H) - z = 0.
  \end{align*}
  Note that we have $R$ equations for $R + 1$ unknowns, and the absolute value of the coefficients is at most $2^{\langle M\rangle}$ by \cref{lem:coeff bound} above.
  Therefore, Siegel's lemma~\cite{hindry2000diophantine} ensures that there exists an integral solution with
  \[   \max~\{ \lVert H \rVert_\infty, \lvert z \rvert \}
  \leq ((R+1)2^{\langle M\rangle})^{R/((R+1)-R)}
    = ((R+1)2^{\langle M\rangle})^R
  \leq 2^{R (\log (R+1) + \langle M\rangle)}. \]
  Since $R \leq \langle G \rangle$ we obtain the claim of the lemma.
\end{proof}

The upshot of \cref{lem:general siegel} is the following:
If $\lambda/k \not\in \Delta_G(M)$ then there exists a non-trivial facet separating it from the moment polytope.
\Cref{lem:general siegel} tells us that any such facet can be encoded by some Ressayre element $(H,z)$ that can be specified using no more than $O(\langle G \rangle^2 (\log \langle G \rangle + \langle M \rangle))$ bits.
Indeed, $(H,z)$ consists of $\dim T + 1 = O(\langle G \rangle)$ coefficients, each of which requires $O(\langle G \rangle (\log \langle G \rangle + \langle M \rangle))$ bits.

Now consider the determinant polynomial $d_{H,z}$, which is a multivariate polynomial of degree $\#\Phi(H<z) 
 \leq \dim M \leq \langle M \rangle$ in $\#\Phi(H=z) \leq \dim M \leq \langle M \rangle$ variables.
As before, we can use the Schwartz-Zippel lemma to deduce the existence of a point $p \in \{0,\dots,\langle M \rangle\}^{\#\Phi(H=z)}$ such that $d_{H,z}(p)\neq0$.
Note that $p$ can be encoded using no more than $O(\langle M \rangle \log \langle M \rangle)$ bits.

We have thus obtained a polynomial-sized certificate $(H,z,p)$ that will be accepted by the algorithm.
We conclude that the problem \textsc{MomentPolytope} is in $\coNP$.

\subsection{\textsc{MomentPolytope} is in NP}
\label{subsec:general np}

We now show that \textsc{MomentPolytope} is also in $\NP$.
As in the case of the Kronecker polytope, we will use the geometric description from~\cite{ness1984stratification}.
For any non-zero vector $\psi \in M$, define its \emph{image under the moment map} $\mu(\psi) \in i \mathfrak g^*$ by the following formula:
\begin{equation}
\label{eq:moment map}
  \forall x \in i\mathfrak g\!: \qquad \mu(\psi)(x) = \frac {\braket{\psi | \pi(x) | \psi}} {\braket{\psi | \psi}},
\end{equation}
where $\pi$ denotes the (complexified) representation of the Lie algebra of $G$ on $M$.
Let $r(\psi)$ denote the unique point of intersection of the coadjoint $G$-orbit through $\mu(\psi)$ with the positive Weyl chamber~$i \mathfrak t^*_+$.
Then we have the following characterization of the moment polytope~\cite{ness1984stratification}:
\begin{equation}
\label{eq:general polytope geometric}
  \Delta_M(G) = \{ r(\psi) : 0 \neq \psi \in M \}.
\end{equation}

We now describe a polynomial-time algorithm that takes as input the problem instance $(G,M,\lambda,k)$ together with a certificate that consists of a non-zero vector $\tilde\psi \in \QQ[i]^{\#\Phi(M)}$.
We first compute $\mu(\tilde\psi)$ as a rational vector with respect to the basis of $i \mathfrak g^*$ fixed at the beginning of \cref{sec:general}; this can be done in polynomial time by using \cref{lem:representation matrices in polynomial time}.
We then compute $d^2 := \lVert \mu(\tilde\psi) - \lambda/k \rVert_{i \mathfrak g^*}^2$; this can again be done in polynomial time.
Finally, we accept if and only if
\begin{equation}
\label{eq:general accept}
  d = \lVert \mu(\tilde\psi) - \lambda/k \rVert_{i \mathfrak g^*}
  \leq \frac12 \frac 1 k \frac 1 {4^{\langle G \rangle (\log \langle G \rangle + \langle M \rangle)}}
\end{equation}
Here, we think of the highest weight $\lambda \in i \mathfrak t^*$ as an element in $i \mathfrak g^*$ by extending by zero on the root spaces.

Let us first analyze the case where the point $\lambda/k \in \Delta_G(M)$.
According to \cref{eq:general polytope geometric}, there exists a unit vector $\psi \in M$ such that $r(\psi) = \lambda/k$.
As the moment map $\psi \mapsto \mu(\psi)$ is $G$-equivariant, we may in fact arrange for $\mu(\psi) = \lambda/k$ to hold.
Now let $\tilde\psi \in \QQ[i]^{\#\Phi(M)}$ be a truncation of $\psi$ to $b$ bits in the real and imaginary part of each of its $\#\Phi(M) = O(\langle M \rangle)$ components with respect to the weight basis from \cref{lem:representation matrices in polynomial time}.
The following lemma bounds the resulting error on the image under the moment map as a function of the precision $b$:

\begin{lem}
\label{lem:general truncation}
  We have that
  \[ \lVert \mu(\tilde\psi) - \mu(\psi) \rVert_{i \mathfrak g^*} = O(\langle G \rangle^{1/2} \langle M \rangle^{1/4} 2^{\langle M \rangle} 2^{-b/2}). \]
\end{lem}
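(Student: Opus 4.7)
The plan is to express the moment-map difference as a trace against the difference of two rank-one projectors, apply a trace/operator-norm H\"older bound, and then estimate each factor separately. Concretely, by \cref{eq:moment map} one has $\mu(\psi)(x) = \tr(P_\psi\,\pi(x))$ where $P_\psi := \ket{\psi}\bra{\psi}/\braket{\psi|\psi}$, so
\[
  (\mu(\tilde\psi) - \mu(\psi))(x) = \tr\bigl((P_{\tilde\psi} - P_\psi)\,\pi(x)\bigr)
\]
for every $x \in i\mathfrak g$; since $\pi$ sends $\mathfrak g$ to anti-Hermitian operators, $\pi(x)$ is Hermitian. By the definition of the dual Killing norm and the H\"older inequality $\lvert\tr(AB)\rvert \leq \lVert A\rVert_1 \lVert B\rVert$,
\begin{align*}
  \lVert \mu(\tilde\psi) - \mu(\psi) \rVert_{i\mathfrak g^*}
  &= \sup_{\lVert x \rVert_{i\mathfrak g} \leq 1} \lvert \tr((P_{\tilde\psi}-P_\psi)\,\pi(x)) \rvert \\
  &\leq \lVert P_{\tilde\psi} - P_\psi \rVert_1 \cdot \sup_{\lVert x \rVert_{i\mathfrak g} \leq 1} \lVert \pi(x) \rVert.
\end{align*}

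The first factor is handled directly by \cref{lem:hilbert space vs trace norm} applied to the Hilbert space $M$ of dimension $D = \dim M = O(\langle M \rangle)$, giving $\lVert P_{\tilde\psi} - P_\psi \rVert_1 = O(\langle M \rangle^{1/4} 2^{-b/2})$. For the second factor the key observation is that one can conjugate to the Cartan subalgebra: every $x \in i\mathfrak g$ is $\mathrm{Ad}_G$-conjugate to some $H \in i\mathfrak t$, and since the Killing form is $G$-invariant, $\lVert H \rVert_{i\mathfrak t} = \lVert x \rVert_{i\mathfrak g}$. Unitarity of $\pi$ then yields $\lVert \pi(x) \rVert = \lVert \pi(H) \rVert$, and in the weight basis supplied by \cref{lem:representation matrices in polynomial time} the operator $\pi(H)$ is diagonal with eigenvalues $\{\varphi(H) : \varphi \in \Phi(M)\}$. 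Hence $\lVert \pi(H) \rVert = \max_\varphi \lvert \varphi(H) \rvert$, and Cauchy--Schwarz gives $\lvert \varphi(H) \rvert \leq \lVert \varphi \rVert_{i\mathfrak t^*} \lVert H \rVert_{i\mathfrak t}$, so altogether $\sup_{\lVert x \rVert_{i\mathfrak g} \leq 1} \lVert \pi(x) \rVert \leq \max_{\varphi \in \Phi(M)} \lVert \varphi \rVert_{i\mathfrak t^*}$.

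To finish, \cref{lem:coeff bound} gives $\lVert \varphi \rVert_\infty \leq 2^{\langle M \rangle}$ in the fundamental-weight basis, and, using the standing conventions that basis vectors of $\Lambda_G^*$ have norm $\Theta(1)$ together with the rank bound $R = \dim T = O(\langle G \rangle)$, a Cauchy--Schwarz estimate on the expansion $\varphi = \sum_i c_i \omega_i$ yields $\lVert \varphi \rVert_{i\mathfrak t^*} = O(\langle G \rangle^{1/2} 2^{\langle M \rangle})$. Multiplying the three bounds together produces the claim. The main obstacle in this plan is bounding $\lVert \pi(x) \rVert$ coordinate-freely in terms of $\lVert x \rVert_{i\mathfrak g}$ rather than in terms of coefficients in some basis; the conjugation-to-Cartan reduction, which uses the $G$-invariance of the Killing form and the unitarity of $\pi$, is the essential geometric input and reduces everything to a clean Cauchy--Schwarz estimate on the weights of $M$.
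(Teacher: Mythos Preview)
Your proof is correct and follows essentially the same approach as the paper: both express the moment-map difference as $\tr((P_{\tilde\psi}-P_\psi)\pi(x))$, bound the trace-norm factor via \cref{lem:hilbert space vs trace norm}, and reduce the operator-norm factor to the Cartan subalgebra using $G$-invariance of the Killing form and unitarity of $\pi$. The only cosmetic difference is in the final step: the paper first reduces $\max_\varphi|\varphi(x)|$ to $\max_w|\nu(w\cdot x)| = \lVert\nu\rVert_{i\mathfrak g^*}$ via the convex-hull-of-weights argument and then converts to coordinates, whereas you invoke \cref{lem:coeff bound} (which packages that same convex-hull argument) to bound $\lVert\varphi\rVert_\infty$ for each weight and then convert; the net estimate is identical.
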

\begin{proof}
  For any $x \in i \mathfrak g$, we have from \cref{eq:moment map,lem:hilbert space vs trace norm} that
  \[
    \lvert \mu(\psi)(x) - \mu(\tilde\psi)(x) \rvert
    \leq \lvert \tr (P_{\tilde\psi} - P_\psi) \pi(x) \rvert
    \leq \lVert P_{\tilde\psi} - P_\psi \rVert_1 \; \lVert \pi(x) \rVert
    \leq 5 \langle M \rangle^{1/4} 2^{-b/2} \; \lVert \pi(x) \rVert,
  \]
  where $\lVert \pi(x) \rVert$ denotes the operator norm of $\pi(x)$.
  Therefore,
  \begin{equation}
  \label{eq:general truncation mid}
    \lVert \mu(\psi) - \mu(\tilde\psi) \rVert_{i \mathfrak g^*}
    = \max_{\lVert x \rVert_{i \mathfrak g} = 1} \lvert \mu(\psi)(x) - \mu(\tilde\psi)(x) \rvert
    \leq 5 \langle M \rangle^{1/4} 2^{-b/2} \max_{\lVert x \rVert_{i \mathfrak g} = 1} \lVert \pi(x) \rVert.
  \end{equation}
  To compute the right-hand side maximum, we recall that $\lVert-\rVert_{i \mathfrak g}$ is invariant under the adjoint action, the operator norm is (in particular) invariant under conjugation by unitaries, and $\pi$ is correspondingly equivariant, so that we may restrict the maximization to $x \in i \mathfrak t$.
  Then $\pi(x)$ acts as a multiplication operator in the weight basis, multiplying weight vectors of weight $\varphi$ by $\varphi(x)$.
  We may assume without loss of generality that $M$ is an irreducible representation with some highest weight $\nu$, so that
  \[
      \lVert \pi(x) \rVert
    = \max_{\varphi \in \Phi(M)} \lvert \varphi(x) \rvert
    = \max_{w \in W} \lvert \nu(w \cdot x) \rvert,
  \]
  where we have again used that the convex hull of weights is equal to the convex hull of the Weyl group orbit of the highest weight (cf.\ the proof of \cref{lem:coeff bound}).
  It follows that
  \begin{equation}
  \label{eq:general trunction end}
      \max_{\lVert x \rVert_{i \mathfrak g} = 1} \lVert \pi(x) \rVert
    = \max_{x \in i \mathfrak t : \lVert x \rVert_{i \mathfrak g} = 1} \lvert \nu(x) \rvert
    = \lVert \nu \rVert_{i \mathfrak g^*}
    = O(\langle G \rangle^{1/2} 2^{\langle M \rangle}),
  \end{equation}
  where the last estimate follows from observing that the highest weight $\nu$ is specified in terms of its coefficients (in binary) with respect to the basis vectors of the weight lattice, which have norm $\Theta(1)$.
  The asserted bound follows from plugging \cref{eq:general trunction end} back into \cref{eq:general truncation mid}.
\end{proof}

As a direct consequence of \cref{lem:general truncation}, the truncation to $b$ bits leads to an error of at most
\[
    \lVert \mu(\tilde\psi) - \lambda/k \rVert_{i \mathfrak g^*}
  = \lVert \mu(\tilde\psi) - \mu(\psi) \rVert_{i \mathfrak g^*}
  = O(\langle G \rangle^{1/2} \langle M \rangle^{1/4} 2^{\langle M \rangle} 2^{-b/2}).
\]
Comparing with \cref{eq:general accept}, we find that it suffices to choose $b = O(\log k + \langle G \rangle (\log \langle G \rangle + \langle M \rangle)$ bits of precision to produce a certificate that the algorithm accepts.
This is polynomial in the size of the problem instance.

%

\medskip

Conversely, let us assume that in fact $\lambda/k \not\in \Delta_G(M)$.
We will use the following lemma:

\begin{lem}
\label{lem:distance general}
  Let $\lambda \in \Lambda^*_G$ and $k > 0$. Then, if $\lambda/k \not\in \Delta_G(M)$, it has $\lVert-\rVert_{i \mathfrak g^*}$-distance at least
  \[ \frac 1 k \frac 1 {4^{\langle G \rangle (\log \langle G \rangle + \langle M \rangle)}} \]
  to the moment polytope.
\end{lem}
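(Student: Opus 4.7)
The plan is to mirror the proof of \cref{lem:distance kron}, substituting \cref{lem:general siegel} for \cref{lem:kron siegel} and the duality of the Killing-form norms $\lVert-\rVert_{i \mathfrak g}$ and $\lVert-\rVert_{i \mathfrak g^*}$ in place of the standard Euclidean duality. By \cref{eq:general ressayre}, the assumption $\lambda/k \notin \Delta_G(M)$ produces a non-trivial Ressayre element $(H,z)$ with $\lambda(H)/k < z$, and by \cref{lem:general siegel} we may take this $(H,z)$ to satisfy $\lVert H\rVert_\infty, \lvert z\rvert \leq 2^{\langle G\rangle(\log(\langle G\rangle+1)+\langle M\rangle)}$, measured with respect to the bases fixed at the beginning of \cref{sec:general}.

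The first key step is integrality: because $\Lambda_G\subseteq i\mathfrak t$ and $\Lambda_G^*\subseteq i\mathfrak t^*$ are dual lattices in the chosen bases, the pairing $\lambda(H)$ is an integer, and so is $kz$; therefore $\lambda(H) - kz$ is a negative integer, giving $\lvert\lambda(H)/k - z\rvert \geq 1/k$. The second step is the geometric content of the Ressayre inequality: after extending $\lambda$ to $i\mathfrak g^*$ by zero on the root spaces, $\Delta_G(M)$ is contained in the closed half-space $\{r \in i\mathfrak g^* : r(H) \geq z\}$. Since $\lVert-\rVert_{i \mathfrak g^*}$ is dual to $\lVert-\rVert_{i \mathfrak g}$, the distance from $\lambda/k$ to the bounding hyperplane equals $\lvert\lambda(H)/k - z\rvert / \lVert H\rVert_{i \mathfrak g}$, and hence
\[
  \operatorname{dist}_{i\mathfrak g^*}\!\bigl(\lambda/k,\,\Delta_G(M)\bigr)
  \;\geq\; \frac{1}{k\,\lVert H\rVert_{i \mathfrak g}}.
\]

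What remains is to bound $\lVert H\rVert_{i \mathfrak g}$ from above in terms of $\lVert H\rVert_\infty$. The coordinates of $H$ are with respect to the distinguished basis of $\Lambda_G$ (simple coroots together with the torus basis vectors), each of which has $\lVert-\rVert_{i \mathfrak g}$-norm $\Theta(1)$ by our normalization; the triangle inequality then gives $\lVert H\rVert_{i \mathfrak g} = O(\dim T)\cdot \lVert H\rVert_\infty$. Plugging in $\dim T \leq \langle G\rangle$ and the Siegel bound yields $\lVert H\rVert_{i\mathfrak g} \leq 2^{\langle G\rangle(\log\langle G\rangle + \langle M\rangle)+O(\log\langle G\rangle)}$, which a short calculation shows to be at most $4^{\langle G\rangle(\log\langle G\rangle + \langle M\rangle)} = 2^{2\langle G\rangle(\log\langle G\rangle + \langle M\rangle)}$ for all sufficiently large $\langle G\rangle$; the finitely many small cases are handled by a trivial adjustment of constants. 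Combined with the previous display, this delivers the claimed distance bound.

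The only place I expect any real care is the final norm conversion, since the chosen basis of $\Lambda_G$ is not orthonormal: one must track how the $\ell_\infty$-bound from Siegel's lemma interacts with the non-orthogonal Killing-form basis and absorb the factor $\dim T$ together with the transition from $\log(\langle G\rangle+1)$ to $\log\langle G\rangle$ into the clean base-$4$ exponent. This is bookkeeping rather than a conceptual hurdle, and no representation-theoretic input beyond what was already used to establish \cref{lem:general siegel} is needed.
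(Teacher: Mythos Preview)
Your proposal is correct and follows essentially the same approach as the paper's proof: separate $\lambda/k$ by a Ressayre facet, invoke \cref{lem:general siegel} for the coefficient bound, use integrality of $\lambda(H)-kz$ to get the $1/k$ in the numerator, and then convert $\lVert H\rVert_\infty$ to $\lVert H\rVert_{i\mathfrak g}$ and absorb everything into the base-$4$ exponent. The only cosmetic difference is that the paper asserts the sharper conversion $\lVert H\rVert_{i\mathfrak g}\leq\sqrt{\langle G\rangle}\,\lVert H\rVert_\infty$ (implicitly relying on a uniform bound on the Gram matrix of the coroot basis), whereas you use the triangle-inequality bound $\lVert H\rVert_{i\mathfrak g}=O(\dim T)\,\lVert H\rVert_\infty$; either factor is swallowed by the passage from base $2$ to base $4$, so the final inequality is the same.
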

\begin{proof}
  Consider a non-trivial facet that separates $\lambda/k$ from $\Kron(m)$.
  According to \cref{lem:general siegel}, any such facet can be described by some Ressayre element $(H,z)$ satisfying \cref{eq:general siegel}.
  We can therefore lower-bound the distance of $\lambda/k$ to $\Delta_G(m)$ by
  \[
    \frac {\lvert \lambda(H)/k - z \rvert}{\lVert H \rVert_{i \mathfrak g}}
    = \frac 1 k \frac {\lvert \lambda(H) - k z \rvert} {\lVert H \rVert_{i \mathfrak g}}
    \geq \frac 1 k \frac 1 {\lVert H \rVert_{i \mathfrak g}}.
  \]
  Recall that we may also think of $H$ as an integer vector with respect to the basis vectors fixed at the beginning of \cref{sec:general}.
  As the latter have norm $\Theta(1)$, we obtain that $\lVert H \rVert_{i \mathfrak g} \leq \sqrt{\langle G \rangle} \lVert H \rVert_\infty$.
  Together with the upper bound \cref{eq:general siegel}, we find that
  \[
    \frac 1 k \frac 1 {\lVert H \rVert_{i \mathfrak g}}
    \geq \frac 1 k \frac 1 {\langle G \rangle^{1/2} \lVert H \rVert_\infty}
    \geq \frac 1 k \frac 1 {\langle G \rangle^{1/2} 2^{\langle G \rangle (\log(\langle G \rangle + 1) + \langle M \rangle)}}
    \geq \frac 1 k \frac 1 {4^{\langle G \rangle (\log \langle G \rangle + \langle M \rangle)}}.
  \]
\end{proof}

It follows from \cref{lem:distance general,eq:general polytope geometric} that the distance of $r(\tilde\psi)$ for any vector $0 \neq \tilde\psi \in M$ to $\lambda/k$ is never smaller than $1/k \cdot 1/4^{\langle G \rangle (\log \langle G \rangle + \langle M \rangle)}$.
This implies that
\[
     \frac 1 k \frac 1 {4^{\langle G \rangle (\log \langle G \rangle + \langle M \rangle)}}
\leq \lVert r(\tilde\psi) - \lambda/k \rVert_{i \mathfrak g^*}
\leq \lVert \mu(\tilde\psi) -\lambda/k \rVert_{i \mathfrak g^*},
\]
where the second inequality is~\cite[Lemma 4.10]{walter2014thesis}.
In view of the acceptance condition in \cref{eq:general accept}, we find that our algorithm will therefore never accept if $\lambda/k \not\in \Kron(m)$.
We conclude that the problem \textsc{KronPolytope} is in $\NP$.
\Cref{subsec:general conp,subsec:general np} together establish \cref{thm:general}.

\section*{Acknowledgments}

We acknowledge pleasant discussions with Christian Ikenmeyer.
We thank the Simons Institute for the Theory of Computing, the American Institute of Mathematics, and the organizers of the workshop on ``Combinatorics and complexity of Kronecker coefficients'', where this work has been initiated.

\bibliographystyle{siamplain}
\bibliography{momentcomplex}

\end{document}